\newtheorem{defn}{Definition}[section]
\newtheorem{thm}{Theorem}[section]
\newtheorem{lem}{Lemma}[section]
\newcommand{\vs}{\vspace{-0pt}}
\newcommand{\beq}{\begin{equation}}
\newcommand{\eeq}{\end{equation}}  
\newcommand{\eea}{\end{align}}
\newcommand{\bea}{\begin{align}}  
\newcommand{\x}{{\mathbf{x}}  }
\newcommand{\li}{\left<}
\newcommand{\ri}{\right>}
\newcommand{\D}{{\mathbf{D}}}
\newcommand{\ab}{\mathbf{a}}
\newcommand{\Lc}{\mathcal{L}}
\newcommand{\hX}{\mathbf{\hat{\X}}}
\newcommand{\hx}{\mathbf{\hat{\x}}}
\newcommand{\bb}{{\mathbf{b}}}
\newcommand{\cb}{{\mathbf{c}}}
\newcommand{\db}{{\mathbf{d}}}
\newcommand{\rb}{\mathbf{r}}
\newcommand{\Ib}{{\mathbf{I}}}
\newcommand{\ib}{{\mathbf{i}}}
\newcommand{\Scb}{{\bar{{\mathcal{S}}}}}
\newcommand{\SB}{{\mathbf{S}}}
\newcommand{\Wb}{\mathbf{W}}
\newcommand{\Lcb}{\bar{\mathcal{L}}}
\newcommand{\vv}{{\mathbf{v}}}
\newcommand{\X}{\mathbf{X}}
\newcommand{\Pro}{{\mathbb{P}}}
\newcommand{\y}{{\mathbf{y}}}
\newcommand{\s}{{\mathbf{s}}}
\newcommand{\z}{{\mathbf{z}}}
\newcommand{\A}{{\mathbf{A}}}
\newcommand{\B}{\mathbf{B}}
\newcommand{\U}{{\mathbf{U}}}
\newcommand{\V}{{\mathbf{V}}}
\newcommand{\w}{{\mathbf{w}}}
\newcommand{\sg}{\text{sgn}}
\newcommand{\R}{{\mathcal{R}}}
\newcommand{\I}{{\mathcal{I}}}
\newcommand{\Rb}{{\mathbb{R}}}
\newcommand{\Cb}{{\mathbb{C}}}
\newcommand{\Sb}{{\bar{S}}}
\newcommand{\Sc}{{\mathcal{S}}}
\newcommand{\Ac}{\mathcal{A}}
\newcommand{\Y}{{\mathbf{Y}}}
\newcommand{\N}{\mathcal{N}}
\newcommand{\la}{{\lambda}}
\newcommand{\E}{{\mathbb{E}}}
\newcommand{\C}{\mathbf{C}}
\def\@normalsize{\@setsize\normalsize{10pt}\xpt\@xpt
\abovedisplayskip 10pt plus2pt minus5pt\belowdisplayskip 
\abovedisplayskip \abovedisplayshortskip \z@ 
plus3pt\belowdisplayshortskip 6pt plus3pt 
minus3pt\let\@listi\@listI}
\def\subsize{\@setsize\subsize{12pt}\xipt\@xipt}
\def\section{\@startsection {section}{1}{\z@}{1.0ex plus 1ex minus .2ex}{.2ex plus .2ex}{\large\bf}}
\def\subsection{\@startsection {subsection}{2}{\z@}{.2ex plus 1ex} {.2ex plus .2ex}{\subsize\bf}}
\begin{document}
%don't want date printed
\date{}
%make title bold and 14 pt font (Latex default is non-bold, 16pt) 
\title{\Large\bf Recovering Jointly Sparse Signals via Joint Basis Pursuit}
%for single author (just remove % characters)
%\author{I. M. Author \\
%  My Department \\
%  My Institute \\
%  My City, STATE, zip}
%for two authors (this is what is printed) 
\author{\begin{tabular}[t]{c@{\extracolsep{8em}}c} 
Samet Oymak  & Babak Hassibi \end{tabular}\\
 \\
        Department of Electrical Engineering \\
       California Institute of Technology \\
       Pasadena, CA~~91125
}
\maketitle
%I don't know why I have to reset thispagestyle, but otherwise get page numbers 
\thispagestyle{empty}
\subsection*{\centering Abstract}
%IEEE allows italicized abstract
{\em
This work considers recovery of signals that are sparse over two bases. For instance, a signal might be sparse in both time and frequency, or a matrix can be low rank and sparse simultaneously. To facilitate recovery, we consider minimizing the sum of the $\ell_1$-norms that correspond to each basis, which is a tractable convex approach. We find novel optimality conditions which indicates a gain over traditional approaches where $\ell_1$ minimization is done over only one basis. Next, we analyze these optimality conditions for the particular case of time-frequency bases. Denoting sparsity in the first and second bases by $k_1,k_2$ respectively, we show that, for a general class of signals, using this approach, one requires as small as $O(\max\{k_1,k_2\}\log\log n)$ measurements for successful recovery hence overcoming the classical requirement of $\Theta(\min\{k_1,k_2\}\log(\frac{n}{\min\{k_1,k_2\}}))$ for $\ell_1$ minimization when $k_1\approx k_2$. Extensive simulations show that, our analysis is approximately tight. 
%end italics mode
}
\\\begin{keywords}basis pursuit, compressed sensing, phase retrieval, duality, convex optimization
\end{keywords}

\let\thefootnote\relax\footnotetext{This work was supported in part by the National Science Foundation under grants CCF-0729203, CNS-0932428 and CCF-1018927, by the Office of Naval Research under the MURI grant N00014-08-1-0747, and by Caltech's Lee Center for Advanced Networking.}

\section{Introduction}
Compressed sensing is concerned with the recovery of sparse vectors and has recently been the subject of immense interest. One of the main methods is Basis Pursuit (BP) where the $\ell_1$ norm is minimized subject to convex constraints. Assuming $\x$ has a sparse representation over the basis $\U$ (i.e.~$\U\x$ is a sparse vector) and assuming we get to see the observations $\A\x$, Basis Pursuit performs the following optimization to get back to $\x$.
\beq
\nonumber\min_{\hx}\|\U\hx\|_1~~~\text{subject to}~~~\A\x=\A\hx~~~\hspace{20pt}\text{(BP)}
\eeq
In this work, we'll be investigating recovery of vectors that can be sparsely represented over two bases. For example, a vector such as a Dirac comb can be sparse in time and frequency. Similarly, we can consider a low rank matrix which is supported over an unknown submatrix and zero elsewhere and hence sparse. Assuming $\x$ is sparse over $\U_1,\U_2$, in order to induce sparsity in both bases, we will be considering the following approach, which we call Joint Basis Pursuit (JBP).\vspace{-3pt}
\beq
\nonumber\min_{\hx}\|\U_1\hx\|_1+\la\|\U_2\hx\|_2~~~\text{s.t.}~~~\A\hx=\A\x~~~\hspace{10pt}\text{(JBP)}
\eeq
For the case of a matrix $\X$ that is simultaneously sparse and low rank, we may minimize the summation of $\ell_1$ norm and the matrix nuclear norm, which is denoted by $\|\cdot\|_\star$ and is equal to summation of the singular values. Assuming, we observe linear measurements $\Ac(\X)$, we propose solving the following problem (JBP-Matrix) to recover $\X$.\vspace{-3pt}
\beq
\nonumber \min_{\hX}\|\hX\|_\star+\la\|\hX\|_1~~~\text{s.t.}~~~\Ac(\hX)=\Ac(\X)~~~\hspace{5pt}\text{(JBPM)}
\eeq
While it is possible to come up with relevant problems, this paper will focus on JBP and JBPM. Our motivations are,
\begin{itemize}
\vspace{-1pt}
\item Investigating whether JBP can outperform regular BP.
\vspace{-1pt}
\item The sparse phase retrieval problem, in which one has measurements of a sparse vector $\x$ and observe $|\li\ab_i,\x\ri|^2$ as measurements \cite{Can4, Eldar}. While it is not possible to cast this as a regular compressed sensing problem, it can be cast as JBPM where we wish to recover sparse and low rank matrix, $\x\x^*$. This problem is known to have applications to X-Ray crystallography \cite{Mil} and has recently attracted interest \cite{Can4,Eldar, Vet,luzum}.
\end{itemize}
\vspace{0pt}
{\bf{Background:}} It should be emphasized that, recently, there has been significant interest in using a combination of different norms to exploit the structure of a signal. While this paper deals with signals having sparse representations in both bases, \cite{Can1,Can3,Cha} considers the problem of separating the signals that are combinations of sparsely representable incoherent pieces. \\
{\bf{Contributions:}} In this work we provide sharp recovery conditions that guarantees success of JBP and JBPM. Next, we cast these conditions in a \emph{dual certificate} framework to facilitate analysis. For the case of time-frequency bases, we analyze the dual certificate construction to find that for the class of ``periodic signals'', one needs at most $O(\max\{k_1,k_2\}\log\log n)$ measurements where $k_1,k_2$ represents the sparsity in $\U_1,\U_2$. This shows that JBP can indeed outperform regular BP which requires $\Theta(k\log\frac{n}{k})$ measurements for recovery of a $k$ sparse vector \cite{Ind,Don}. Finally, simulation results indicate that our results are sharp. We believe that, the result of this paper can be seen as negative in nature. While, JBP provides an improvement, it is not a significant improvement when we consider the fact that signals that are simultaneously sparse are \emph{few} in number. 
%

%\item Talk about relation to weighted algorithms, cite Amin, talk about Amin's results, i.e. non uniformly sparse signals, are order wise constant improvement while this improvement is $O(\log n)$.
%\item  The reason is although signals that are sparse in both bases are quite few in number, simulation results indicates that we still need $O(k)$ samples for recovery.
%\end{itemize}
%Rest of the paper will focus on technical analysis and simulation results ...

\section{Problem Setup}
We begin by considering the (JBP) problem and assume $\x\in{\Cb}^{n}$ is a signal that is sparse over two complete bases, $\U_1,\U_2$. Later on we will briefly extend our approach to (JBPM) and the recovery of matrices that are simultaneously sparse and low rank. \\%Let these bases be $\U_1\in\Cb^{n\times n}$ and $\U_2\in\Cb^{n\times n}$ and $\U_1\x, \U_2\x$ are sparse vectors.
\indent The basic question we would like to answer is whether one can do better in recovering $\x$ from measurements $\A\x$ by exploiting the joint sparsity of $\x$.\\
%\subsection{Notation}
\indent Before, going into technical details, we'll introduce the relevant notation.
Denote the set $\{1,2,\dots,n\}$ by $[n]$. Let $S_1,S_2\subseteq[n]$ denote the supports of $\x$ in the bases $\U_1$ and $\U_2$, i.e., locations of nonzero entries of $\U_1\x$ and $\U_2\x$ respectively. Further, let $\Sc_1(\cdot):\Cb^n\rightarrow\Cb^{|S_1|},\Sc_2(\cdot):\Cb^n\rightarrow\Cb^{|S_2|}$ denote the operators that collapse a vector onto $S_1,S_2$ respectively. $\sg(\cdot):\Cb^n\rightarrow\Cb^n$ is the function that returns entry wise signs of a vector, i.e., $0$ is mapped to $0$ and $a\neq 0$ is mapped to $\frac{a}{|a|}$. $\Ib$ will be the identity matrix of the appropriate size. Null space of a linear operator $\A$ is denoted by $\N(\A)$. $\R(\cdot),\I(\cdot):\Cb^n\rightarrow\Rb^n$ are the functions that returns entry-wise real and imaginary parts of a vector. Denote $\sqrt{-1}$ by $\ib$. $\D$ is the Discrete Fourier Transform (DFT) matrix of the appropriate size and given as follows,
\vspace{-5pt}
\beq
\D_{i,j}=\frac{W^{(i-1)(j-1)}}{\sqrt{n}}~~~~1\leq i,j\leq n
\eeq
where $W$ is always $\exp(-\frac{2\pi \ib}{n})$. We will use $\la_1,\la_2$ and $1,\la$ alternatively.

\noindent {\bf{Remark:}} Proofs that are omitted can be found in the appendix.

\subsection{Recovery Conditions for JBP}
We will start with explaining our approach. Let $\A\in\Cb^{m\times n}$ where $m$ is the number of measurements. The following lemma gives a condition that guarantees $\x$ to be the unique optimum of (JBP).
\begin{lem}[Null Space Condition] Assume, for all $\w\in\N(\A)$, the following holds,
\beq
\label{null}
\sum_{i=1}^2\la_i(\R(\li\sg(\U_i\x),\U_i\w\ri)+|\Scb_i(\U_i\w)|_1)>0
\eeq
Then, $\x$ is the unique optimizer of (JBP).
\end{lem}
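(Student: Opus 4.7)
The plan is to establish this null space condition by the standard subgradient/feasible-perturbation argument, adapted to the complex $\ell_1$ setting. Any competitor to $\x$ in (JBP) has the form $\hx = \x + \w$ with $\w \in \N(\A)$, so it suffices to show that the objective strictly increases for every nonzero $\w \in \N(\A)$.

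The first step is a termwise lower bound for each of the two $\ell_1$ pieces. For each $i \in \{1,2\}$ I split $[n] = S_i \cup \Scb_i$ and write
\beq
\nonumber \|\U_i \hx\|_1 = \sum_{j \in S_i} |(\U_i \x)_j + (\U_i \w)_j| + \sum_{j \in \Scb_i} |(\U_i \w)_j|,
\eeq
using that $\U_i \x$ vanishes on $\Scb_i$. On the support I invoke the elementary complex subgradient inequality $|a + b| \geq |a| + \R(\overline{\sg(a)}\, b)$, valid for any $a \neq 0$ (it reduces to $|a+b|\cdot|a| \geq \R(\bar{a}(a+b))$, which is Cauchy--Schwarz). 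Applied coordinatewise and summed, this yields
\beq
\nonumber \|\U_i \hx\|_1 \;\geq\; \|\U_i \x\|_1 + \R(\li \sg(\U_i \x), \U_i \w \ri) + |\Scb_i(\U_i \w)|_1,
\eeq
since $\sg(\U_i \x)$ is supported on $S_i$ and the complement sum is unchanged by the absence of a subgradient there.

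The second step just combines the two bounds. Multiplying by $\la_i$ and summing over $i = 1, 2$,
\beq
\nonumber \sum_{i=1}^2 \la_i \|\U_i \hx\|_1 \;\geq\; \sum_{i=1}^2 \la_i \|\U_i \x\|_1 + \sum_{i=1}^2 \la_i \bigl(\R(\li \sg(\U_i \x), \U_i \w \ri) + |\Scb_i(\U_i \w)|_1\bigr).
\eeq
The rightmost sum is exactly the quantity appearing in \eqref{null}, which by hypothesis is strictly positive for every nonzero $\w \in \N(\A)$. Hence the (JBP) objective at $\hx$ strictly exceeds that at $\x$ whenever $\hx \neq \x$ is feasible, proving uniqueness.

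I do not expect any serious obstacle here; the main subtlety is being careful about the complex case, since $\ell_1$ over $\Cb^n$ is not separable into real and imaginary parts in the way $\ell_1$ over $\Rb^n$ is. In particular one must use the real-part form of the subgradient inequality $|a+b|\geq |a| + \R(\overline{\sg(a)}b)$ rather than a naive sign inequality; this is why the statement of \eqref{null} features $\R(\cdot)$ in front of the inner product. Everything else is mechanical.
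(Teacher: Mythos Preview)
Your proposal is correct and follows exactly the paper's approach: the paper's proof is a two-line sketch that says $f(\x+\w)-f(\x)$ is lower bounded by the left-hand side of \eqref{null} ``from the subgradient of the $\ell_1$ norm,'' and you have simply written out that subgradient step in full (splitting on/off support and using $|a+b|\geq |a|+\R(\overline{\sg(a)}\,b)$). There is no difference in strategy, only in level of detail.
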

\begin{proof}
Let $f(\hx)$ be the cost of (JBP) , i.e., $f(\hx)=\sum_{i=1}^2\la_i\|\U_i\hx\|_1$. Then, for any $\w\in\N(\A)$, $f(\x+\w)-f(\x)$ is lower bounded by the left hand side of (\ref{null}), which follows from the sub gradient of the $\ell_1$ norm. Hence $f(\hx)>f(\x)$ for all $\A\hx=\A\x$, $\hx\neq\x$.
\end{proof}
% Our aim is to show that a reasonable dual certificate can
Based on (\ref{null}), the following lemma connects success of (JBP) to the existence of dual certificates.
\begin{lem}
\label{lemdual}
 Assume $\s_1,\s_2\in\Cb^m,\s\in\Cb^n$ satisfying the following conditions exist:
\begin{itemize}
\item $\Sc_1(\U_1^{-*}(\A^*\s_1+\s))=\Sc_1(\sg(\U_1\x))$
\item $\|\Scb_1(\U_1^{-*}(\A^*\s_1+\s))\|_\infty<1$
\item $\Sc_2(\U_2^{-*}(\A^*\s_2-\s))=\la\Sc_2(\sg(\U_2\x))$
\item $\|\Scb_2(\U_2^{-*}(\A^*\s_2-\s))\|_\infty<\la$
\item $\A$ is invertible over $\bigcap_{i=1}^2\{\vv\big|\Sc_i(\U_i\vv)=\U_i\vv\}$.
\end{itemize} 
Then $\x$ is the unique optimum of (JBP).
\end{lem}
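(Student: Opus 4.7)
The plan is to reduce the claim to the null-space condition of the previous lemma. Fix an arbitrary nonzero $\w\in\N(\A)$; it suffices to show
\[
\sum_{i=1}^2 \la_i\bigl(\R(\li\sg(\U_i\x),\U_i\w\ri)+\|\Scb_i(\U_i\w)\|_1\bigr) \;>\; 0 .
\]
Introduce the two candidate subgradients $\vv_1:=\U_1^{-*}(\A^*\s_1+\s)$ and $\vv_2:=\U_2^{-*}(\A^*\s_2-\s)$ supplied by the hypotheses. The first and third bullets say that $\vv_i$ matches $\la_i\sg(\U_i\x)$ on $S_i$, while the second and fourth say $\|\Scb_i(\vv_i)\|_\infty<\la_i$. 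These are exactly the ingredients needed to run the standard ``dual certificate $\Rightarrow$ null-space condition'' argument.

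Next, I would exploit that $\w\in\N(\A)$ to eliminate the $\A^*\s_i$ pieces via adjointness:
\[
\li\vv_1,\U_1\w\ri=\li\A^*\s_1+\s,\w\ri=\li\s_1,\A\w\ri+\li\s,\w\ri=\li\s,\w\ri,
\]
and analogously $\li\vv_2,\U_2\w\ri=-\li\s,\w\ri$. Splitting each inner product by support and using $\Sc_i(\vv_i)=\la_i\Sc_i(\sg(\U_i\x))$ gives
\[
\la_i\li\sg(\U_i\x),\U_i\w\ri = \pm\li\s,\w\ri - \li\Scb_i(\vv_i),\Scb_i(\U_i\w)\ri,
\]
with $+$ for $i=1$ and $-$ for $i=2$. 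Summing over $i$ cancels the $\li\s,\w\ri$ contributions, leaving
\[
\sum_{i=1}^2\la_i\,\R(\li\sg(\U_i\x),\U_i\w\ri) \;=\; -\sum_{i=1}^2\R\bigl(\li\Scb_i(\vv_i),\Scb_i(\U_i\w)\ri\bigr).
\]

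Now apply Hölder to each term, using the strict off-support bounds on $\vv_i$:
\[
\bigl|\R\li\Scb_i(\vv_i),\Scb_i(\U_i\w)\ri\bigr|\;\le\;\|\Scb_i(\vv_i)\|_\infty\,\|\Scb_i(\U_i\w)\|_1 \;<\; \la_i\,\|\Scb_i(\U_i\w)\|_1 .
\]
Adding these to $\sum_i\la_i\|\Scb_i(\U_i\w)\|_1$ yields a lower bound of
\[
\sum_{i=1}^2 \bigl(\la_i-\|\Scb_i(\vv_i)\|_\infty\bigr)\,\|\Scb_i(\U_i\w)\|_1 \;\ge\; 0,
\]
with strict positivity unless $\Scb_1(\U_1\w)=\Scb_2(\U_2\w)=0$. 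That equality case is precisely $\w\in\bigcap_{i=1}^2\{\vv\mid\Sc_i(\U_i\vv)=\U_i\vv\}$, and combined with $\w\in\N(\A)$ the final invertibility hypothesis forces $\w=0$. Thus for every nonzero $\w\in\N(\A)$ the null-space inequality is strict, and the previous lemma delivers uniqueness.

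The only place that takes real care is the passage from complex inner products to real parts after Hölder, together with keeping the sign conventions of $\s$ consistent between the two certificates so that the $\li\s,\w\ri$ terms cancel rather than reinforce. Everything else is a bookkeeping exercise, so I do not expect a serious obstacle.
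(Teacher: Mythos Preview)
Your proof is correct and follows essentially the same route as the paper: define $\vv_i=\U_i^{-*}(\A^*\s_i\pm\s)$, use $\w\in\N(\A)$ to make the $\A^*\s_i$ pieces vanish so that the two $\li\s,\w\ri$ contributions cancel, and then combine the on-support equalities with the strict off-support $\ell_\infty$ bounds and the invertibility hypothesis to force strict positivity in the null-space condition. The paper's proof is considerably terser---it simply records $\sum_i\li\U_i\w,\vv_i\ri=0$ and asserts that the bullet conditions plus invertibility finish the job---whereas you spell out the support split and the H\"older step explicitly, which is a welcome improvement in clarity.
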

\begin{proof}What we need to show is that if such $\s_1,\s_2,\s$ exist and the invertibility assumption holds then the left hand side of (\ref{null}) is strictly positive for all $\w\in\N(\A)$.
%Let us state the null space condition: For all $\w\in\N(\A)$ if following holds then $\x$ is the unique optimum of (\ref{eq1}).
%\begin{equation}
%\label{NSP}
%\sg(\U_1\x)^*\U_1\w+|\Scb_1(\U_1\w)|_1+\la\sg(\U_2\x)^*(\U_2\w)+\la|\Scb_2(\U_2\w)|_1>0
%\end{equation}
Assume such $\s_1,\s_2,\s$ exist and let $\vv_1,\vv_2\in\Cb^{n}$ to be:
\vspace{-5pt}
\begin{equation}
\vv_1=\U_1^{-*}(\A^*\s_1+\s)~~~\text{and}~~~\vv_2=\U_2^{-*}(\A^*\s_2-\s^*)
\end{equation}
Observe that for any $\w\in\N(\A)$, using $\A\w=0$,
\vspace{-5pt}
\begin{equation}
\label{eqzero}
\sum_{i=1}^2\li\U_i\w,\vv_i\ri=\sum_{i=1}^2\w^*\A^*\s_i+\w^*\s-\w^*\s=0
\end{equation}
To end the proof observe that $\vv_1,\vv_2$ satisfies the conditions listed in Lemma \ref{lemdual} which implies that the LHS of (\ref{null}) is strictly positive when combined with (\ref{eqzero}). This follows from the fact that either $\Scb_1(\U_1\w)$ or $\Scb_2(\U_2\w)$ is nonzero due to invertibility assumption.
\end{proof}
The dual certificate approach for regular BP has been used in \cite{Tro,Can2,Can3}. Letting $\U=\U_1$, compared to Lemma \ref{lemdual}, it requires invertibility of $\A$ over $\{\vv\big|\Sc_1(\U_1\vv)=\U_1\vv\}$ rather than the intersection and it requires $\|\Scb_1(\U_1^{-*}\A^*\s_1)\|_\infty<1$, while Lemma \ref{lemdual} can overcome this by making use of the extra variable $\s$. From this perspective, JBP can be viewed as a combination of two regular BP's that are allowed to ``help'' each other via $\s$.

\section{Main Results}
Our main result is concerned with the time-frequency bases, i.e., Identity and the DFT matrices. Before stating the main result, let us first describe the setting for which it holds.
\begin{defn}$S$ is a $l$ periodic subset of $[n]$ if $n$ is divisible by $l$ and for any $i\in[n]$, we have,
\beq
i\in S\iff j\in S~\text{for all $j$ such that}~j\equiv i~(\text{mod}~l)
\eeq
Observe that if $S$ is a $l$ periodic support, $|S|$ is divisible by $n/l$.
\end{defn}
\begin{thm}\label{main}
Let $\U_1=\Ib$, $\U_2=\D$, $1> \alpha\geq 0$ be an arbitrary constant and without loss of generality assume $|S_1|\leq |S_2|$. Further, assume the followings hold,
\begin{itemize}
%\item $n=k_1k_2$ where $n^{1-\eps}\geq k_1,k_2\geq n^\eps$.
\item $|S_1|\leq \frac{n}{\log n}$.
%$\x$ has support $S_1$ and $\D\x$ has support $S_2$ where 
\item $S_1,S_2$ are $n_1,n_2$ periodic supports, where $n=n_1n_2$.
\item $|S_2|\leq |S_1|\log^{\alpha}(n)$.%$|S_1|\log^{-\alpha}(n)\leq 
\end{itemize}
Then, for the following scenarios, $\x$ can be successfully recovered via JBP with high probability (for sufficiently large $n$) when the matrix $\A\in\Cb^{m\times n}$ is generated with i.i.d complex Gaussian entries.
\begin{itemize}%|S_2|\log\log^{-1}(n)\leq
\item If $ |S_2|\leq |S_1|\log\log n$ setting $\la=1$ and using $m=O(|S_2|\log\log n)$ measurements.
\item If $|S_2|\geq |S_1|\log\log n$, setting $\la=\log^{-1}(n)$ and using $m=O(|S_2|)$ measurements.
% for large $n$, with high probability, $\x$ can be perfectly recovered via (P1) with $m=O(\max\{|S_1|,|S_2|\}\log\log n)$ measurements.
\end{itemize}
\end{thm}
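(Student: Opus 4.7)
The plan is to invoke Lemma \ref{lemdual} and construct the triple $(\s_1,\s_2,\s)$ explicitly in the time-frequency setting. Because $\U_1=\Ib$ and $\D$ is unitary (so $\U_2^{-*}=\D$), the certificate conditions reduce to
\begin{align*}
\Sc_1(\A^*\s_1+\s) &= \Sc_1(\sg(\x)),\quad \|\Scb_1(\A^*\s_1+\s)\|_\infty<1,\\
\Sc_2(\D(\A^*\s_2-\s)) &= \la\,\Sc_2(\sg(\D\x)),\quad \|\Scb_2(\D(\A^*\s_2-\s))\|_\infty<\la,
\end{align*}
together with injectivity of $\A$ on $V:=\{\vv:\Sc_1(\vv)=\vv\}\cap\{\vv:\Sc_2(\D\vv)=\D\vv\}$.

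First I would exploit periodicity. Folding $\x$ into an $n_2\times n_1$ array, an $n_1$-periodic time support corresponds to a fixed set of active ``rows'' in the fold, and an $n_2$-periodic frequency support corresponds to a fixed set of active ``columns''. Consequently $\sg(\x)$ and $\sg(\D\x)$ live in block-structured subspaces, the intersection $V$ has dimension much smaller than $|S_1|$, and --- crucially --- the off-support entries of $\s$ and $\D\s$ are naturally grouped into $n_1$ and $n_2$ cosets, respectively, whose total number is only polylogarithmic in $n$ under the theorem's hypotheses. This bounded combinatorial complexity is what will turn the usual $\log n$ into $\log\log n$.

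Next I would pick $\s$ deterministically as the minimum-norm solution of
\begin{equation*}
\Sc_1(\s)=\Sc_1(\sg(\x)),\qquad \Sc_2(\D\s)=-\la\,\Sc_2(\sg(\D\x)).
\end{equation*}
This is a feasible least-squares system with $|S_1|+|S_2|$ linear constraints on $\s\in\Cb^n$; a direct DFT calculation on periodic supports yields $\|\Scb_1(\s)\|_\infty<1-c$ and $\|\Scb_2(\D\s)\|_\infty<\la(1-c)$ for an absolute $c>0$, leaving strictly positive slack on both complements. With this $\s$ absorbing the sign patterns, each $\s_i$ only has to satisfy $\A^*\s_i=0$ on $S_i$ and an off-support $\ell_\infty$ bound within the slack. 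I would then build $\s_i$ by the standard least-squares dual-certificate recipe and control the off-support entries by Gaussian concentration: each entry of the relevant transform of $\A^*\s_i$ is subgaussian with variance $O(|S_i|/m)$, and under the periodic structure the union bound only has to range over the cosets identified above. Thus $m\gtrsim C\max\{|S_1|,|S_2|\}\log\log n$ suffices.

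The two regimes of the theorem correspond to how this budget is split. With $\la=1$ and $|S_2|\le|S_1|\log\log n$, both certificates are balanced and the common budget gives $m=O(|S_2|\log\log n)$; with $\la=\log^{-1}(n)$ and $|S_2|$ larger, the $\s_2$ certificate becomes essentially vacuous (both its slack and its required on-support values scale with $\la$), so the $\s_1$ side dominates and $m=O(|S_2|)$ is enough. The main obstacle will be making Step three rigorous: arguing that the off-support $\ell_\infty$ tail bound really costs only $\log\log n$ in $m$, rather than the naive $\log n$ from a union bound over all of $\Scb_i$. This requires showing that the ``bad directions'' in $\Cb^n$ one must defeat are confined to the aliasing lattice induced by the two periodic supports, whose size is controlled via the assumptions $|S_1|\le n/\log n$ and $|S_2|\le|S_1|\log^\alpha n$. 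Injectivity of $\A$ over $V$ then follows automatically since $\dim V\ll m$.
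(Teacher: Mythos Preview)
Your plan diverges from the paper's proof in its two central mechanisms, and both of your proposed mechanisms have genuine gaps.

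\textbf{The deterministic $\s$.} You propose choosing $\s$ as the minimum-norm solution of $\Sc_1(\s)=\Sc_1(\sg(\x))$, $\Sc_2(\D\s)=-\la\Sc_2(\sg(\D\x))$ and claim that ``a direct DFT calculation on periodic supports'' gives $\|\Scb_1(\s)\|_\infty<1-c$ and $\|\Scb_2(\D\s)\|_\infty<\la(1-c)$. This is not justified, and if it were true it would make the remainder of your argument superfluous: with strict slack already in hand you could take $\s_1=\s_2=0$, and only injectivity of $\A$ on $V$ would be needed, so $m\gtrsim\dim V$ would suffice and no concentration argument would be required at all. In the paper the roles are reversed: $\s_1,\s_2$ are built first by the least-squares recipe so that $\y_i=\U_i\A^*\s_i$ already matches the signs on $S_i$, and $\s$ is a \emph{random, data-dependent} correction built from the off-support ``overflow'' of $\y_1,\y_2$.

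\textbf{Where the $\log\log n$ comes from.} You attribute it to a reduced union bound over ``polylogarithmically many cosets'', but under the hypotheses neither $n_1$ nor $n_2$ is polylogarithmic (e.g.\ for the Dirac comb $n_1=n_2=\sqrt{n}$). In the paper the union bound is over all $n$ coordinates. The $\log\log n$ arises from a double-exponential effect: the overflow vector $\bb_i$ (the part of $\Scb_i(\y_i)$ exceeding $\la_i/4$) has i.i.d.\ entries that vanish with probability $1-p$, $p\asymp\exp(-cm/|S_i|)$. The cross-term $\D^*\bb_2$ (resp.\ $\D\bb_1$) then has per-entry tail $\exp(-\Omega(m/(\la^2|S_i|p)))$, and one needs $\frac{m}{\la^2|S_i|}\exp\!\big(\tfrac{m}{c|S_i|}\big)\gg\log n$; choosing $m=O(|S_i|\log\log n)$ makes $\exp(m/(c|S_i|))$ polylogarithmic and beats the $\log n$ cost of the naive union bound.

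\textbf{The role of periodicity.} It is not used to shrink the union bound. It is used (Lemma~\ref{useful} in the paper) to guarantee that the correction $\s$ does not disturb the on-support sign conditions: because $\C=\D^*\Ib_{S_2}\D$ satisfies $\C_{i,j}=0$ whenever $i\not\equiv j\pmod{n_1}$, one gets $\Sc_1(\C\bb_1)=0$ for $\bb_1$ supported on $\Sb_1$, and similarly for the other basis. Periodicity also gives the row-energy bound $\|\rb_i\|_2^2=|S_2|/n$, which feeds into the subgaussian tail for $\|\Scb_1(\cb_1)\|_\infty$.
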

{\bf{Remark:}} Our proof approach will inherently require $m\geq \max\{|S_1|,|S_2|\}$. Consequently, if $|S_2|\geq |S_1|\log(n)$, then one can already perform the regular $\ell_1$ optimization over $\U_1=\Ib$ to ensure recovery with $m=O(|S_2|)$ measurements. Hence, $|S_2|\leq |S_1|\log^{\alpha}(n)$ is a reasonable assumption.
%Also the particular value of $n_1,n_2$ is not important.

\subsection{Signals with Periodic Supports}
Theorem \ref{main} holds for signals whose supports are periodic with $n_1,n_2$ over $\Ib$ and $\D$ respectively, where $n=n_1n_2$. Here, we give a family of such signals that satisfy this requirement. Let $T$ be the set of signals $\vv\in\Cb^n$ such that for some $l\leq n_1$ and $0\leq t<n$,
\beq
v_j=\begin{cases}0~\text{if}~j\not\equiv l~\text{(mod $n_1$)}\\W^{ jt}~\text{else}\end{cases}
\eeq
Basically, $T$ is the set of Dirac combs with period $n_1$ and hence for any $\vv\in T$, $\D\vv$ will have $\frac{n}{n_1}$ periodic support. In general, almost all $\x$ of the form,
\beq
\x=\sum_{\vv_i\in T}\alpha_i\vv_i
\eeq
will have $n_1$ periodic support and $\D\x$ will have $\frac{n}{n_1}$ periodic support. The reason we say almost all is because cancellations may occur when $\vv_i$'s are added. However, if $\alpha_j$'s are chosen from a continuous distribution, the chance of cancellation is $0$.

\subsection{Converse Results}
We should emphasize that, the main reason we have considered the $\Ib,\D$ pair is the fact that almost all bases $\U_1$ and $\U_2$ do not permit signals that are sparse in both. The following lemma illustrates this.
\begin{lem}\label{triv}Assume $\U_1^{-1},\U_2^{-1}$ have i.i.d entries chosen from a continuous distribution. Then, with probability $1$, there exists no nonzero vector $\x$ satisfying $|S_1|+|S_2|\leq n$.
\end{lem}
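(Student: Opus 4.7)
The plan is to reformulate the existence of such an $\x$ as a statement about two random subspaces meeting nontrivially, show that for any fixed pair $(S_1,S_2)$ with $|S_1|+|S_2|\leq n$ the meeting is trivial with probability one, and close with a union bound over the finite family of admissible support pairs.

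First I would observe that any nonzero $\x$ whose $\U_1$-support lies in $S_1$ and whose $\U_2$-support lies in $S_2$ satisfies $\x=\U_i^{-1}(\U_i\x)$, so $\x$ lies in the column span of the submatrix $\M_i\in\Cb^{n\times|S_i|}$ obtained from $\U_i^{-1}$ by keeping the columns indexed by $S_i$. The existence of such an $\x$ is therefore equivalent to the concatenated matrix $[\M_1 \; -\M_2]\in\Cb^{n\times(|S_1|+|S_2|)}$ admitting a nontrivial kernel element $(\ab_1,\ab_2)$; and since $\U_2^{-1}$ is almost surely invertible, every such kernel vector does produce a nonzero $\x=\M_1\ab_1=\M_2\ab_2$.

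Next, for a fixed pair $(S_1,S_2)$ with $|S_1|+|S_2|\leq n$, I would show that $[\M_1\;-\M_2]$ has full column rank almost surely. Rank deficiency is cut out by the simultaneous vanishing of all $(|S_1|+|S_2|)\times(|S_1|+|S_2|)$ minors of this matrix, each of which is a polynomial in the independent entries of $\U_1^{-1},\U_2^{-1}$. Because those entries are drawn from a continuous (atomless) distribution, a standard Fubini/induction argument reduces the problem to the univariate case and shows that any non-identically-zero polynomial in them vanishes only on a Lebesgue-null set. It therefore suffices to exhibit a single concrete choice of $\U_1^{-1},\U_2^{-1}$ for which some maximal minor is nonzero: for instance, take both matrices to be (permuted) identity matrices chosen so that the columns $\M_1$ and $\M_2$ sit in disjoint coordinate subspaces of $\Cb^n$, which is possible precisely because $|S_1|+|S_2|\leq n$.

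Finally, a union bound over the at most $2^{2n}$ admissible pairs $(S_1,S_2)\subseteq[n]\times[n]$ closes the argument. I do not expect a genuinely hard step here; the only point deserving care is constructing the explicit rank-$(|S_1|+|S_2|)$ witness needed to certify that the relevant minor polynomial is not identically zero, but this witness follows immediately from $|S_1|+|S_2|\leq n$.
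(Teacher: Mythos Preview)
Your proposal is correct and follows essentially the same route as the paper: fix a support pair, form the concatenated $n\times(|S_1|+|S_2|)$ matrix of the relevant columns of $\U_1^{-1}$ and $\U_2^{-1}$, argue it has full column rank almost surely, and close with a union bound over the finitely many admissible pairs. The only minor difference is in the full-rank step: rather than a polynomial-vanishing argument with an explicit witness, the paper simply notes that the concatenated matrix itself has i.i.d.\ entries from a continuous distribution, from which full column rank is immediate.
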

%\begin{proof}
%First, consider fixed $S_1,S_2\subseteq[n]$. If there exists $\z_1,\z_2$ supported over $S_1,S_2$ and satisfying, $\U_1\z_1=\U_2\z_2=\x$, we have
%\beq
%[\U_{1,S_1}~\U_{2,S_2}][\z_1^*~-\z_2^*]^*=0
%\eeq
%However, whenever $|S_1|+|S_2|\leq n$, $[\U_{1,S_1}~\U_{2,S_2}]$ is full rank with probability $1$ which means there are no such nonzero $\z_1,\z_2$ and hence no nonzero $\x$. Union bounding over all $n\choose {|S_1|}$, $n\choose {|S_2|}$ sets, we still obtain there exists no $\x$ that can be represented over both $\U_1,\U_2$ where $|S_1|+|S_2|\leq n$, except for a measure $0$ probability.
%\end{proof}
An interesting work by Tao shows that, such results are true even for highly structured bases, \cite{Tao}. In particular, if $n$ is a prime number, we still have $|S_1|+|S_2|>n$ requirement for a signal over $\U_1=\Ib$ and $\U_2=\D$ bases.

\section{Proof of Theorem \ref{main}}
This section will be dedicated to the analysis of Lemma \ref{lemdual} to prove Theorem \ref{main}. We start by proposing a construction for $\s_1,\s_2,\s$ that certifies optimality of $\x$.
% We will focus on the particular matrix pair Identity and DFT matrix and analyze the problem for complex Gaussian measurements $\A$. By making use of Lemma \ref{lemdual}, our aim is to come up with a dual certificate construction which can guarantee recovery while surpassing $O(k\log n)$ observation bound for recovery of a $k$ sparse vector via $\ell_1$ minimization. Although, we won't be finding a theoretical lower bound on minimum number of measurements, extensive simulations indicate that our analysis are quite tight.

%Structure of dual certificate
%
%
%
%
%
%
\subsection{Construction of $\s_1,\s_2,\s$}
For the following discussion, we'll be using $(\U_1,\U_2)$ and $(\Ib,\D)$ and $(1,\la)$ and $(\la_1,\la_2)$ interchangeably. The construction of $\s_1,\s_2$ will follow a classical approach previously used in \cite{Can3, Tro, Can4}. Letting $\A_{S_1}\in\Cb^{m\times |S_1|}$ denote the submatrix by choosing columns corresponding to $S_1$ and $\B=\A\D^*$, we will use the following $\s_1,\s_2$.
\bea
&\s_1=\A_{S_1}(\A_{S_1}^*\A_{S_1})^{-1}\Sc_1(\sg(\x))\\
&\s_2=\B_{S_2}(\B_{S_2}^*\B_{S_2})^{-1}\la\Sc_2(\sg(\D\x))
\end{align}
Since $\Ib,\D$ are unitary we have $\U_i^{-*}=\U_i$. By construction $\s_1,\s_2$ already satisfies,
\beq
\Sc_i(\U_i\A^*\s_i)=\la_i\Sc_i(\sg(\U_i\x))~~~i\in\{1,2\}
\eeq
However, one has to control the term $\|\Scb_i(\U_i\A^*\s_i)\|_\infty$ and we will make use of $\s$ to achieve this. Denote $\U_i\A^*\s_i$ by $\y_i$. Define the vectors $\{\bb_1,\bb_2\}$ as follows:
\beq\label{myb}
\R(b_{i,j})=\begin{cases}0~\text{if}~j\in S_i\\0~\text{if}~j\in\Sb_i~\text{and}~|\R(y_{i,j})|\leq \la_i/4\\\R(y_{i,j})-\la_i\sg(\R(y_{i,j}))/4~\text{else}\end{cases}
\eeq
and imaginary part $\I(b_{i,j})$ is obtained from $\I(y_{i,j})$ in the same way.
Observe that, $\|\Scb_i(\y_i-\bb_i)\|_\infty< \la_i/2$. Based on $\{\bb_i\}_{i=1}^2$ construct $\s$ as follows,
\begin{align}
&\s=\D^*(\bb_2-\cb_2)-\Ib(\bb_1-\cb_1)~~~\text{where}~\\
\nonumber&\cb_1=\D^*\Ib_{S_2}\D\bb_1,~\cb_2=\D\Ib_{S_1}\D^*\bb_2
\end{align}
Here, $\Ib_{S_1}$, $\Ib_{S_2}$ are diagonal matrices whose diagonal entries corresponding to $S_1,S_2$ are $1$ and the rest are zero.

\begin{lem}\label{nice} Assume $\x,\{\y_i,\bb_i,\cb_i\}_{i=1}^2$ are the same as described previously. Then, one has the following:
\begin{align}
\nonumber&\Sc_1(\y_1+\s)=\Sc_1(\sg(\x))\\
\nonumber&\Sc_2(\y_2-\D\s)=\la\Sc_2(\sg(\D\x))\\
\nonumber&\|\Scb_1(\y_1+\s)\|_\infty< \frac{1}{2}+\|\Scb_1(\cb_1)\|_\infty+\|\Scb_1(\D^*\bb_2)\|_\infty\\
\nonumber&\|\Scb_2(\y_2-\D\s)\|_\infty<\frac{\la}{2}+\|\Scb_2(\cb_2)\|_\infty+\|\Scb_2(\D\bb_1)\|_\infty
\end{align}
\end{lem}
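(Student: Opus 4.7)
The plan is to substitute the definitions of $\s$, $\cb_1$, and $\cb_2$ into $\y_1+\s$ and $\y_2-\D\s$, simplify, and then verify the four claims in turn; the only non-algebraic ingredient is a short Fourier identity driven by the periodicity of $S_1$ and $S_2$.

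First I would use the unitarity of $\D$ to observe that
\beq
\D^*\cb_2=\Ib_{S_1}\D^*\bb_2,\qquad \D\cb_1=\Ib_{S_2}\D\bb_1,
\eeq
so that after the obvious cancellations
\bea
\y_1+\s &= (\y_1-\bb_1)+\Ib_{\Scb_1}\D^*\bb_2+\cb_1,\\
\y_2-\D\s &= (\y_2-\bb_2)+\Ib_{\Scb_2}\D\bb_1+\cb_2.
\eea
On $S_1$, $\bb_1$ vanishes by construction so $\Sc_1(\y_1-\bb_1)=\Sc_1(\y_1)=\Sc_1(\sg(\x))$, and the middle term vanishes identically on $S_1$ due to the projection $\Ib_{\Scb_1}$. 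Thus the first equality in the lemma reduces to $\Sc_1(\cb_1)=0$, and symmetrically the second reduces to $\Sc_2(\cb_2)=0$.

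The crux of the proof is verifying these two vanishing statements, which is where the $n_1,n_2$-periodicity of $S_1,S_2$ enters. I would prove $\Sc_1(\cb_1)=0$ by showing that the $(i,j)$-entry of $\D^*\Ib_{S_2}\D$ equals zero for every $i\in S_1$ and $j\in\Scb_1$; since $\bb_1$ is supported on $\Scb_1$, this is enough. Expanding,
\beq
(\D^*\Ib_{S_2}\D)_{i,j}=\frac{1}{n}\sum_{k\in S_2} W^{(k-1)(j-i)},
\eeq
and partitioning $S_2$ into arithmetic progressions with common difference $n_2$ of length $n_1$ each, the inner geometric sum over each progression equals $n_1$ when $n_1\mid(j-i)$ and $0$ otherwise, because $W^{n_2}$ is a primitive $n_1$-th root of unity. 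But $n_1\mid(j-i)$ together with $i\in S_1$ forces $j\in S_1$ by the $n_1$-periodicity of $S_1$, contradicting $j\in\Scb_1$. A symmetric argument swapping the roles of $S_1,S_2$ and $\Ib,\D$ gives $\Sc_2(\cb_2)=0$.

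For the infinity-norm bounds I would restrict the two displayed identities to $\Scb_1$ and $\Scb_2$ and apply the triangle inequality, using the estimate $\|\Scb_i(\y_i-\bb_i)\|_\infty<\la_i/2$ that is built into the definition of $\bb_i$. Since $\Ib_{\Scb_i}$ acts as the identity on $\Scb_i$, the middle term becomes exactly $\Scb_1(\D^*\bb_2)$ (respectively $\Scb_2(\D\bb_1)$), yielding the claimed strict inequalities with constant $\la_i/2$. The main obstacle is the Fourier identity in the previous paragraph; everything else is bookkeeping.
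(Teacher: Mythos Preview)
Your proof is correct and follows essentially the same route as the paper: the algebraic decomposition of $\y_1+\s$ and $\y_2-\D\s$ is identical, and the paper likewise reduces the first two identities to $\Sc_1(\cb_1)=0$ and $\Sc_2(\cb_2)=0$, which it obtains from a separate lemma (Lemma~\ref{useful}) stating that the $(i,j)$-entry of $\D^*\Ib_{S_2}\D$ vanishes whenever $i\not\equiv j\pmod{n_1}$; you prove exactly this inline via the same geometric-sum computation. The infinity-norm bounds are handled the same way in both, via the triangle inequality and the built-in estimate $\|\Scb_i(\y_i-\bb_i)\|_\infty<\la_i/2$.
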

Based on Lemma \ref{nice} and Lemma \ref{lemdual}, JBP recovers $\x$ if we have, $\|\Scb_1(\cb_1)\|_\infty+\|\Scb_1(\D^*\bb_2)\|_\infty\leq 1/2$ and $\|\Scb_2(\cb_2)\|_\infty+\|\Scb_2(\D\bb_1)\|_\infty\leq\la/2$.

As a next step, we can analyze $\|\Scb_1(\D^*\Ib_{S_2}\D\bb_1)\|_\infty$ and $\|\Scb_1(\D^*\bb_2)\|_\infty$ and find the conditions that guarantees their sum to be small. The analysis for $S_2$ will be identical to $S_1$ and hence is omitted.

%Probabilistic Stuff
%
%
%
%
%
\subsection{Probabilistic Analysis}
Assume $\A$ is i.i.d complex normal with variance $\frac{1}{m}$ and $m\geq 64\max\{|S_1|,|S_2|\}$. This will guarantee,
\beq\label{goodcon}
\sigma_{min}(\A_{S_1})\geq 1/\sqrt{2}~\text{and}~\sigma_{min}(\B_{S_2})\geq 1/\sqrt{2}
\eeq
with probability $1-\exp(-\Omega(m))$, \cite{Ver}.

Now, conditioned on $\A_{S_1},\B_{S_2}$ satisfy (\ref{goodcon}),
\beq
\nonumber\|\s_i\|_2^2=\s_i^*\s_i=\la_i^2\sg(\x)^*(\A_{S_1}^*\A_{S_1})^{-1}\sg(\x)\leq 2\la_i^2|S_i|
\eeq
and $\Scb_i(\y_i)$ is an i.i.d Gaussian vector whose entries having variance $\frac{\|\s_i\|_2^2}{m}$. Given these, we need to understand, when can we make sure,
\beq
\nonumber\|\Scb_1(\D^*\Ib_{S_2}\D\bb_1)\|_\infty\leq \frac{1}{4}~\text{and}~\|\Scb_1(\D^*\bb_2)\|_\infty\leq \frac{1}{4}
\eeq
From (\ref{myb}), observe that $\Scb_i(\bb_i)$ is a function of $\Scb_i(\y_i)$ which is i.i.d. random Gaussian. The next lemma, gives a characterization of $\bb_i$.
\begin{lem}\label{bob} Assume $m\geq 64\max\{|S_1|,|S_2|\}$. Then, the entries $\{\Scb_i(\bb_i)_j\}_{j=1}^{|\Sb_i|}$ of $\Scb_i(\bb_i)$ are i.i.d. random variables with the following distribution,
\beq
\Scb_i(\bb_i)_j~\text{is}~\begin{cases}0~\text{with probability at least}~1-4\exp(-\frac{m}{16|S_i|})\\\text{otherwise distributed as}~\z\end{cases}
\eeq
where $\z$ is $0$ mean and subgaussian norms (see \cite{Ver}) of $\R(\z),\I(\z)$ are upper bounded by $c_0\la_i\sqrt{\frac{|S_i|}{m}}$ for an absolute constant $c_0>0$.
\end{lem}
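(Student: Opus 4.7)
The plan is to decouple the off-support entries of $\y_i$ from the randomness in $\s_i$ via the independence structure of the Gaussian ensemble, then recognize the construction (\ref{myb}) as an entrywise soft-threshold and read off its output distribution from standard Gaussian tail estimates.

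First I would condition on the event (\ref{goodcon}), which holds with probability $1-e^{-\Omega(m)}$ and, as observed in the text above the lemma, gives the bound $\|\s_i\|_2^2 \leq 2\lambda_i^2|S_i|$. Both cases $i=1,2$ can be handled uniformly by writing $\y_i = \M_i^*\s_i$ with $\M_1 = \A$ and $\M_2 = \B = \A\D^*$; since $\D^*$ is unitary, $\M_2$ is again i.i.d.\ complex Gaussian with variance $1/m$ per entry. By the definitions of $\s_1,\s_2$, each $\s_i$ is a measurable function only of the columns of $\M_i$ indexed by $S_i$, and the remaining columns are independent of those. Conditional on $\s_i$ the entries $\{(\Scb_i(\y_i))_j\}$ are therefore i.i.d.\ circularly symmetric complex Gaussians with per-entry variance $\|\s_i\|_2^2/m$; equivalently, $\R(y_{i,j})$ and $\I(y_{i,j})$ are independent real Gaussians of variance at most $\sigma^2 := \lambda_i^2|S_i|/m$.

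Next, (\ref{myb}) is precisely a coordinatewise soft threshold at level $\lambda_i/4$ applied separately to the real and imaginary parts of each $y_{i,j}$. Since this is a deterministic, entrywise map, it preserves the conditional i.i.d.\ property across $j \in \Sb_i$. A given entry vanishes exactly when both $|\R(y_{i,j})| \leq \lambda_i/4$ and $|\I(y_{i,j})| \leq \lambda_i/4$; using the Gaussian tail
\beq
\nonumber \Pro\!\left(|\R(y_{i,j})| > \tfrac{\lambda_i}{4}\right) \;\leq\; 2\exp\!\left(-\tfrac{(\lambda_i/4)^2}{2\sigma^2}\right) \;=\; 2\exp\!\left(-\tfrac{m}{32|S_i|}\right)
\eeq
and a union bound over the real and imaginary parts yields the stated nonzero probability up to the universal constant. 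On the nonzero branch, note that soft-thresholding is $1$-Lipschitz and sign-preserving: $|\R(b_{i,j})| \leq |\R(y_{i,j})|$ pointwise and $\R(b_{i,j})$ is zero-mean by the symmetry of the Gaussian about the origin. Hence the subgaussian norm of $\R(b_{i,j})$ is bounded by that of $\R(y_{i,j})$, which is $O(\sigma) = O(\lambda_i\sqrt{|S_i|/m})$; the same holds for $\I(b_{i,j})$, and absorbing constants into $c_0$ closes the argument.

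The only mildly delicate point in this plan is the independence argument for $i=2$: one has to note that $\B = \A\D^*$ is again i.i.d.\ complex Gaussian in order to transfer the ``columns-indexed-by-$S_i$ independent of the complement'' property from $\A$ to $\B$. Everything else reduces to one-dimensional Gaussian tail bounds and the elementary observation that soft-thresholding cannot inflate magnitudes, so I do not anticipate a substantial obstacle beyond keeping the constants straight.
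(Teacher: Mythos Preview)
Your conditioning/independence setup and the Gaussian-tail computation for $\Pro(b_{i,j}=0)$ match the paper's proof essentially line by line (the paper also reduces to one coordinate, conditions on $\A_{S_i}$, and union-bounds real and imaginary parts). The one substantive gap is the subgaussian bound on the \emph{nonzero branch}. The lemma asks for the subgaussian norm of the \emph{conditional} law of $\R(b_{i,j})$ given $b_{i,j}\neq 0$, and your domination argument $|\R(b_{i,j})|\le |\R(y_{i,j})|$ only controls the \emph{unconditional} moments. Conditioning divides by $\Pro(b_{i,j}\neq 0)$, which here can be as small as $e^{-\Omega(m/|S_i|)}$, so pointwise domination by $|\R(y_{i,j})|$ does not transfer: indeed $|\R(y_{i,j})|$ itself, conditioned on $|\R(y_{i,j})|>\la_i/4$, has second moment at least $(\la_i/4)^2$, so its subgaussian norm does \emph{not} scale like $\la_i\sqrt{|S_i|/m}$.

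The paper closes exactly this gap with a short Mill's-ratio computation (its auxiliary Lemma on $z=x-c\,\sg(x)$ conditioned on $|x|\ge c$): writing $g=\R(y_{i,j})/\sigma$ with $\sigma\le \la_i\sqrt{|S_i|/m}$ and $c=(\la_i/4)/\sigma\ge 2$ (this is where $m\ge 64|S_i|$ is used), one has
\[
\Pro\bigl(|\R(b_{i,j})|>t\sigma \,\big|\, \R(b_{i,j})\neq 0\bigr)=\frac{Q(t+c)}{Q(c)}\le \frac{c}{t+c}\cdot\frac{e^{-(t+c)^2/2}}{(1-c^{-2})e^{-c^2/2}}\le 2e^{-t^2/2},
\]
which gives a subgaussian norm $O(\sigma)=O(\la_i\sqrt{|S_i|/m})$ for the conditioned real part. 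The shift by $c$ coming from the soft threshold is essential here; the $1$-Lipschitz/domination observation alone is not enough. Plugging this fix in, the rest of your argument goes through and coincides with the paper's.
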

\subsubsection{Analysis of $\|\Scb_1(\cb_1)\|_\infty$}
We need to show,
\beq\label{show1}
\|\Scb_1(\D^*\Ib_{S_2}\D\bb_1)\|_\infty\leq \frac{1}{4}
\eeq
Calling $\C=\D^*\Ib_{S_2}\D$, from Lemma \ref{useful}, each row of $\C$ has energy $\frac{|S_2|}{n}$. Let $\cb_i$ be the $i$'th column of $\C^*$. Then, using Lemma \ref{bob} and Proposition $5.10$ of \cite{Ver}, for any $i$ and an absolute constant $c>0$,
\begin{align}\label{show2}
\Pro(|\cb_i^*\bb_1|\geq \frac{1}{4})&\leq 12\exp(-\frac{mc}{2^7c_0^2|S_1|\|\cb_{i,T}\|_2^2})\\
&=12\exp(-\frac{mnc}{2^7c_0^2|S_1||S_2|})
\end{align}
Using a union bound over all $i$'s, shows (\ref{show1}) reduces to arguing $n\Pro(|\cb_i^*\bb_1|\geq \frac{1}{4})\rightarrow 0$ which is equivalent to ensuring,
\beq\label{show3}
\frac{mnc}{2^7c_0^2|S_1||S_2|}- \log n\rightarrow \infty~~~\text{as}~~~n\rightarrow\infty
\eeq
Using $n\geq \min\{|S_1|,|S_2|\}\log n$ in the statement of Theorem \ref{main}, (\ref{show3}) holds for $m\geq 2^8c^{-1}c_0^2\max\{|S_1|,|S_2|\}=O(\max\{|S_1|,|S_2|\})$ as desired. 
\subsubsection{Analysis of $\|\Scb_1(\D^*\bb_2)\|_\infty$}
In a similar fashion, we would like to show,
\beq\label{wanted}
\|\Scb_1(\D^*\bb_2)\|_\infty\leq \frac{1}{4}
\eeq
holds with high probability, to conclude. Each row of $\D^*$ has unit $\ell_2$ norm and nonzero entries of $\bb_2$ are i.i.d subgaussians from Lemma \ref{bob}. Letting, $p=4\exp(-\frac{m}{16|S_2|})$ and applying a Chernoff bound w.p.a.l $1-\exp(-np/4)$, number of non zeros in $\bb_2$ is at most $2np$. Considering the inner products between each row of $\D^*$ and $\bb_2$, and using a union bound, (\ref{wanted}) holds, with probability at least,
\beq\label{second}
1-12n\exp(-\frac{mc}{2^8c_0^2\la^2|S_2|p})-\exp(-np/4)
\eeq
Assuming $m=O(|S_2|\log^\alpha (n))$ for some $\alpha<1$, we have $\exp(-np/4)\rightarrow 0$. Finally, to show the second term in (\ref{second}) approaches $0$, for some absolute constants $c_1,c_2>0$, we need to argue,
\beq
\frac{m}{c_1\la^2|S_2|}\exp(\frac{m}{c_2|S_2|})-\log n\rightarrow\infty~~~\text{as}~~~n\rightarrow\infty
\eeq
Following the same arguments for the other basis will yield,
\beq
\frac{m\la^2}{c_1|S_1|}\exp(\frac{m}{c_2|S_1|})-\log n\rightarrow\infty~~~\text{as}~~~n\rightarrow\infty
\eeq
By choosing $m=O(\max\{|S_1|,|S_2|\}\log\log n)$ and $\la=1$ one can always satisfy these. In case $|S_2|\geq |S_1|\log\log n$, choose $\la=\log^{-1}(n)$ and $m$ sufficiently large but $O(|S_2|)$ to still satisfy both.

\section{Empirical Results}

%\section{More comments}
%{\bf{Final comments:}} The dual vector requirements in Lemma \ref{lemdual} is called exact duality. The specific support structure we use is closely related to the fact that we need exact dual vectors. In general, we needed to choose an $\s$ such that $\Sc_1(\s)=\Sc_2(\D\s)=0$. Although our construction works for ``periodic supports'', for general supports, we don't know how to construct a useful $\s$ which can be used for improvement. Using inexact duality might be a reasonable approach for possible future extensions.
%
%Simulations closely match with theoretical results.
%
%If $|S_1|,|S_2|$ are not close, (say $|S_2|>>|S_1|$) then interestingly our analysis suggest, even if $k_2<<n$, sparsity in the second basis will not help with the recovery and we'll simply need $k_1\log n$ measurements. (what about simulations)
%
%In general, both simulations and theoretical guarantees for joint basis pursuit is sort of disappointing.
%
%Finally, this dual certificating idea (i.e. Lemma \ref{lemdual}) can easily be extended to a low-rank and sparse matrix. However, if the improvement is only $\log n$ we are not sure whether it worths pursuing.

%\psfig{file=,width=\textwidth}
%\psfig{figure=JBPvsBP.eps,height=1.0in,angle=90}
%\begin{figure}
%\centerline{
%\psfig{file=JBPvsBP.eps,width=4.0cm}}
%\caption{Laplacian Pyramid decomposition}
%\label{laplacian}
%\end{figure}
%\includegraphics{JBPvsBPcopy.pdf}
While Theorem \ref{main} shows that JBP can indeed outperform BP it is important to understand how good it actually is. We considered the following basic setup: Let $k$ be a positive integer and $n=k^2$. Then, let $\x\in\Rb^n$ be the following dirac comb,
\beq
x_i=1~\text{if}~i\equiv 1~(\text{mod}~k)~\text{and}~0~\text{else}
\eeq
It is clear that $\D\x=\x$ hence the signal is only $\sqrt{n}$ sparse in both domain and the optimal weight in JBP is $\la=1$ by symmetry. Simulation for JBP is performed for $k=\{2,4,6,\dots,32\}$ and for $1\leq m\leq 30$. Interestingly, in order to achieve $50\%$ success, JBP required $\frac{k}{2}\leq m\leq \frac{3k}{5}$ and $\frac{m}{k}$ slightly increased as a function of $k$. This is shown as the straight line in Figure $1$. These results are quite consistent with Theorem \ref{main} from which we expect to have $m=O(k\log\log k)$ measurements. 

On the other hand, $50\%$ success curve for BP is shown as the dashed line in Figure $1$ and obeys $m=O(k\log k)$ as expected from classical results on $\ell_1$ minimization. In particular $\frac{m}{k}$ increases from $1$ to $2.4$ as $k$ moves from $2$ to $32$.
While JBP outperforms BP in this setting, the fact that it requires $\Omega(k)$ samples to recover a highly structured signal is disappointing. It would be interesting to see whether a greedy algorithm can be developed to attack this problem.
\begin{figure}
\label{fig2}
  \begin{center}
{\includegraphics[scale=0.26]{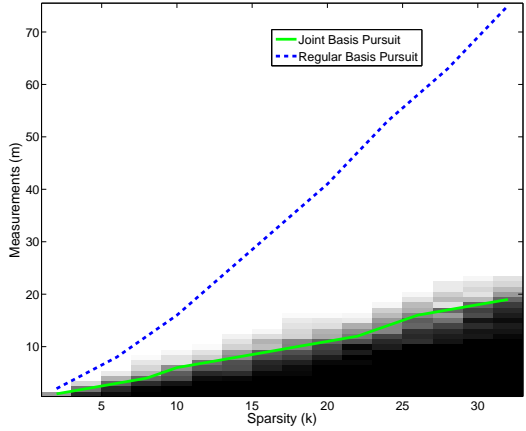}}  
%  \subfigure[$\beta_{1}=0.4$, $\beta_{2}=0.05$]{\includegraphics[scale=0.37]{Sections/A2.pdf}}
  \end{center}
  \caption{Phase transitions of JBP vs BP where sparsity $k$ varies between $2$ to $32$ and $n=k^2$. Dark region indicates failure for JBP while light region corresponds success. Straight and dashed lines are $50\%$ success curves for JBP and BP respectively. While JBP outperforms BP, it still requires $\Omega(k)$ measurements.}
  \vspace*{-10pt}
\end{figure}

\section{Extension to Matrices}
As it has been discussed in the introduction, similar to jointly sparse signals one might as well consider matrices that are sparse and low rank. The motivation is the sparse phase retrieval problem where $\x$ is a sparse vector to be recovered from observations $\{|\li\ab_i,\x\ri|^2\}_{i=1}^m$ where $\{\ab_i\}_{i=1}^m\in\Cb^n$ are the measurement vectors. Although, these measurements are not linear in $\x$, they are linear in $\x\x^*$ as $|\li\ab_i,\x\ri|^2=\ab_i^*\x\x^*\ab_i$. Using the fact that $\x\x^*$ is rank $1$ and sparse, JBPM can be used in order to recover $\X=\x\x^*$ as it will enforce a low-rank and sparse solution.

Although, this work will not deal with the analysis of this problem, we'll point out that our framework for JBP can be used for JBPM as well. In general, assume matrix $\X$ is low-rank and sparse and we wish to recover it from observations $\Ac(\X)$. Let us first introduce notation relevant to structure of $\X\in\Cb^{n\times n}$.
\begin{itemize}
\item Let $S\in[n]\times [n]$ be the usual support of $\X$ and $\Sc:\Cb^{n\times n}\rightarrow\Cb^{|S|}$ be the projection onto $S$.
% \Sc\subseteq\Cb^{n\times n}$ denote the subspace which corresponds to the set of nonzero entries of $\A$.
%\beq
%\Sc=\{\X\in\Cb^{n\times n}~\big|~\X_{i,j}=0~\text{if}~\A_{i,j}=0~\forall~{i,j}\}
%\eeq
\item Assuming $\X$ has singular value decomposition $\U\Sigma\V^*$, Define the subspace $\Lc\in\Cb^{n\times n}$ as,
\beq
\nonumber\Lc=\{\Y\in\Cb^{n\times n}~\big|~(\Ib-\U\U^*)\Y(\Ib-\V\V^*)=0\}
\eeq
\item $\bar{\Lc}$ denotes complement of $\Lc$ and projection onto $\Lc$ is denoted by $\Lc(\cdot):\Cb^{n\times n}\rightarrow\Cb^{n\times n}$. 
%\item Projections of a matrix $\X$ onto a subspace $\Mc$ will be denoted by $\Mc(\X)$.
\item $\Ac^{*}(\cdot):\Cb^m\rightarrow\Cb^{n\times n}$ denotes the adjoint operator. Operator norm is denoted by $\|\cdot\|$.
\end{itemize}

The following lemma is effectively equivalent to Lemma \ref{lemdual} and characterizes a simple condition for $\X$ to be unique optimizer of JBPM.
\begin{lem}
\label{lemdualmat}
Assume $\SB_1,\SB_2\in\Cb^m,\SB\in\Cb^{n\times n}$ satisfying the following conditions exist:
\begin{itemize}
\item $\Lc(\Ac^*(\SB_1)+\SB)=\U\V^*$.
\item $\|\Lcb(\Ac^*(\SB_1)+\SB)\|<1$.
\item $\Sc(\Ac^*(\SB_2)-\SB)=\lambda\cdot\Sc(\sg(\A))$.
\item $\|\Scb(\Ac^*(\SB_2)-\SB)\|_\infty<\lambda$.
\item $\Ac(\cdot)$ is invertible over $\{\Y\big|\Lc(\Y)=\Sc(\Y)=\Y\}$.
\end{itemize} 
Then $\A$ is the unique optimum of (JBPM).
\end{lem}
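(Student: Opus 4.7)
My plan is to transcribe the proof of Lemma \ref{lemdual} into the low-rank-plus-sparse matrix setting, replacing the $\ell_1/\ell_\infty$ duality on the first piece by nuclear/operator-norm duality. First I would establish a matrix null space condition analogous to (\ref{null}): if for every nonzero $\W\in\N(\Ac)$,
\beq\label{nullmat}
\R\li\U\V^*,\W\ri+\|\Lcb(\W)\|_\star+\la\,\R\li\sg(\X),\W\ri+\la\|\Scb(\W)\|_1>0,
\eeq
then $\X$ is the unique optimizer of (JBPM). This is proved via two subgradient lower bounds on $\|\X+\W\|_\star$ and $\|\X+\W\|_1$: the subdifferential of $\|\cdot\|_\star$ at $\X=\U\Sigma\V^*$ equals $\U\V^*+\{\M:\Lc(\M)=0,\,\|\M\|\leq 1\}$, so maximizing the linear part over $\partial\|\X\|_\star$ yields $\R\li\U\V^*,\W\ri+\|\Lcb(\W)\|_\star$ by operator/nuclear-norm duality, while the $\ell_1$ contribution is handled exactly as in the proof of the scalar null space lemma.

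Given $\SB_1,\SB_2,\SB$ satisfying the hypotheses, I would then set $\Y_1=\Ac^*(\SB_1)+\SB$ and $\Y_2=\Ac^*(\SB_2)-\SB$. For any $\W\in\N(\Ac)$ the cancellation analogous to (\ref{eqzero}) is
\beq
\li\Y_1,\W\ri+\li\Y_2,\W\ri=\li\SB_1+\SB_2,\Ac(\W)\ri=0,
\eeq
in which the auxiliary $\SB$ drops out. Splitting each $\li\Y_i,\W\ri$ along the orthogonal decompositions $\Lc\oplus\Lcb$ and $\Sc\oplus\Scb$, and invoking $\Lc(\Y_1)=\U\V^*$ together with $\Sc(\Y_2)=\la\cdot\Sc(\sg(\X))$, this identity rearranges into
\beq
\R\li\U\V^*,\W\ri+\la\,\R\li\sg(\X),\W\ri=-\R\li\Lcb(\Y_1),\Lcb(\W)\ri-\R\li\Scb(\Y_2),\Scb(\W)\ri.
\eeq

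Finally, nuclear/operator-norm duality and H\"older's inequality, together with the strict bounds $\|\Lcb(\Y_1)\|<1$ and $\|\Scb(\Y_2)\|_\infty<\la$, yield $|\R\li\Lcb(\Y_1),\Lcb(\W)\ri|<\|\Lcb(\W)\|_\star$ whenever $\Lcb(\W)\neq 0$, and analogously $|\R\li\Scb(\Y_2),\Scb(\W)\ri|<\la\|\Scb(\W)\|_1$ whenever $\Scb(\W)\neq 0$. Substituting back into (\ref{nullmat}) shows its left-hand side is strictly positive unless both $\Lcb(\W)=0$ and $\Scb(\W)=0$; but in that case $\W$ lies in $\{\Y:\Lc(\Y)=\Sc(\Y)=\Y\}$ and the last hypothesis forces $\W=0$. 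The only piece requiring genuine care is the nuclear-norm subdifferential formula together with the bookkeeping of real parts in the complex setting; beyond that the argument is a direct matrix analogue of the JBP proof.
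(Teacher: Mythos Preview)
Your proposal is correct and follows essentially the same route as the paper: state the matrix null-space condition via the nuclear-norm and $\ell_1$ subgradients, set $\Y_1=\Ac^*(\SB_1)+\SB$ and $\Y_2=\Ac^*(\SB_2)-\SB$, use $\li\Y_1+\Y_2,\W\ri=0$ on $\N(\Ac)$, and combine the strict dual bounds with the invertibility hypothesis to force strict positivity. The only differences are cosmetic (you spell out the nuclear-norm subdifferential and the duality step more explicitly than the paper does).
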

Finally, it would be interesting to see whether similar or better improvements can be shown for JBPM over regular BP or regular nuclear norm minimization algorithms.

\pagebreak
\section{Appendix}

We will start by proving Lemma \ref{triv} using a classical argument.
\begin{proof}[\bf{Proof of Lemma \ref{triv}}] Let us first fix $S_1,S_2$ and consider these particular supports. Let $\C_i\in\R^{n\times |S_i|}$ be the matrix obtained by taking columns of $\U_i^{-1},$ over $S_i$. If $\z_1=\U_1\x$ and $\z_2=\U_2\x$ are supported over $S_1,S_2$, we may write:
\beq\label{z12}
0=\U_1^{-1}\z_1-\U_2^{-1}\z_2=[\C_1~\C_2][\Sc_1(\z_1)^*~-\Sc_2(\z_2)^*]^*
\eeq
By assumption, $[\C_1~\C_2]\in\R^{n\times (|S_1|+|S_2|)}$ has i.i.d. entries from a continuous distribution and hence full column rank with probability $1$ whenever $|S_1|+|S_2|\leq n$. It follows that only $(\z_1,\z_2)$ satisfying (\ref{z12}) is $(0,0)$. There are finitely many $S_1,S_2$ pairs satisfying $|S_1|+|S_2|\leq n$ hence a union bound will still give, with probability $1$, there exists no nonzero vector $\x$ having combined sparsities of $\U_1\x$ and $\U_2\x$ at most $n$.
\end{proof}
\vspace{10pt}
Following lemma gives a simple but useful property of the DFT matrix.

\begin{lem}\label{useful} Let $n=n_1n_2$ and $S_1,S_2$ be $n_1$ and $n_2$ periodic supports. Let $\D\in\Rb^{n\times n}$ be the DFT matrix as previously. Further, let $\C=\D^*\Ib_{S_2}\D$. Then,
\begin{enumerate}
\item $\C_{i,j}=0$ for any $(i,j)$ with $i\not\equiv j~(\text{mod}~n_1)$.
\item For any $i$, $i$'th row $\rb_i$ of $\C$ satisfies $\|\cb_i\|_2^2=\frac{|S_2|}{n}$.
\item For any $\x$ that is supported on $\Sb_1$, we have, $\Sc_1(\C\x)=0$.
\item First three results similarly hold for $\C=\D\Ib_{S_2}\D^*$.
\end{enumerate}
\end{lem}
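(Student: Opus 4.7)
The plan is to exploit the direct definition of $\C$ together with the periodic structure of $S_2$. Write
\[
\C_{i,j} \;=\; (\D^*\Ib_{S_2}\D)_{i,j} \;=\; \frac{1}{n}\sum_{k\in S_2} W^{(k-1)(j-i)},
\]
and split the sum using that $S_2$ is $n_2$-periodic: letting $T_2=\{a\in[n_2]:a\in S_2\}$ be the set of residues, every $k\in S_2$ is uniquely of the form $a+tn_2$ with $a\in T_2$ and $0\le t\le n_1-1$. Substituting gives
\[
\C_{i,j} \;=\; \frac{1}{n}\sum_{a\in T_2} W^{(a-1)(j-i)}\sum_{t=0}^{n_1-1} W^{tn_2(j-i)}.
\]

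Since $W^{n_2}=\exp(-2\pi\ib/n_1)$ is a primitive $n_1$-th root of unity, the inner geometric sum equals $n_1$ when $j-i\equiv 0\pmod{n_1}$ and equals $0$ otherwise. This immediately proves claim (1). For claim (2), I would observe that $\Ib_{S_2}^2=\Ib_{S_2}$, so $\C^*\C = \D^*\Ib_{S_2}\D\D^*\Ib_{S_2}\D = \C$; in other words, $\C$ is a Hermitian orthogonal projection. Hence the $i$-th row norm squared is
\[
\|\rb_i\|_2^2 \;=\; (\C\C^*)_{i,i} \;=\; \C_{i,i} \;=\; \frac{1}{n}\sum_{k\in S_2}1 \;=\; \frac{|S_2|}{n},
\]
giving (2) (with the caveat that the statement's $\cb_i$ should read $\rb_i$, or equivalently use Hermiticity to identify rows and columns).

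For claim (3), take any $\x$ supported on $\Scb_1$ and any $i\in S_1$. By (1),
\[
(\C\x)_i \;=\; \sum_{j \,\equiv\, i\,(\mathrm{mod}\,n_1)} \C_{i,j}\,x_j.
\]
But $S_1$ is $n_1$-periodic, so every index $j\equiv i\pmod{n_1}$ with $i\in S_1$ also lies in $S_1$, hence $x_j=0$. Thus $(\C\x)_i=0$ for all $i\in S_1$, proving $\Sc_1(\C\x)=0$.

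For claim (4) the computation is identical after swapping $\D$ and $\D^*$: one gets $(\D\Ib_{S_2}\D^*)_{i,j}=\frac{1}{n}\sum_{k\in S_2}W^{(k-1)(i-j)}$, and the same geometric-sum and projection arguments go through unchanged. There is no real obstacle here — the whole lemma is essentially the observation that $\D^*\Ib_{S_2}\D$ is a Hermitian projection whose nonzero pattern is forced to be block-diagonal modulo $n_1$ by the primitive-root identity $W^{n_2 n_1}=1$. The only mild subtlety is keeping the off-by-one conventions in the DFT consistent, which I would handle once at the start of the calculation.
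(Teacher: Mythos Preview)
Your proof is correct and follows essentially the same route as the paper: the paper computes $\C_{i,j}$ via the same geometric sum over the $n_2$-periodic index set to obtain (1), derives (2) from $\D\D^*=\Ib$ and $\Ib_{S_2}^2=\Ib_{S_2}$ (your phrasing of this as ``$\C$ is a Hermitian projection'' is just a repackaging of the same two-line computation), and deduces (3) from (1) exactly as you do. Your remark about the $\cb_i$/$\rb_i$ typo is also apt.
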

\begin{proof}
Let us start by analyzing the matrix $\D^*\Ib_{S_2}\D$. Let $\db_i$ be the $i$'th column of $\D$. Then, 
\beq
\C_{i,j}=\db_i^*\Ib_{S_2}\db_j=\sum_{k\in S_2}\db_{i,k}^*\db_{j,k}
\eeq
Using $S_2$ is $n_2$ periodic, for some set $T\in[n_1]$ (which is simply $S_2~(\text{mod}~n_1)$), we may write,
\beq
\C_{i,j}=\sum_{t\in T}\sum_{c=1}^{n_1}\cb_{i,t+cn_2}^*\cb_{j,t+cn_2}
\eeq
Next, for any $i\not\equiv j~(\text{mod}~n_1)$ and any $t\leq n_1$,
\begin{align*}
\sum_{c=0}^{n_1-1}\cb_{i,t+cn_2}^*\cb_{j,t+cn_2}&=\sum_{c=0}^{n_1-1} W^{(j-i)(t+cn_2)}\\
&=W^{(j-i)t}\sum_{c=0}^{n_1-1} W^{c(j-i)n_2}\\
&=n_1W^{(j-i)t}\delta(i-j~(\text{mod}~n_1))
\end{align*}
where $\delta(k)=1\iff k\neq 0$. This proves the first statement. To show the second, $\rb_i=\db_i^*\Ib_{S_2}\D$ implies:
\begin{align}
\nonumber \|\rb_i\|_2^2=\rb_i^*\rb_i=\db_i^*\Ib_{S_2}\D\D^*\Ib_{S_2}\db_i=\db_i^*\Ib_{S_2}\db_i=\frac{|S_2|}{n}
\end{align}
Third result will be a direct consequence of the first one: If $\x\in\Sb_1$, then
\beq
(\C\x)_i=\sum_{j=1}^n\C_{i,j}\x_j=\sum_{j\in \Sb_1}\C_{i,j}\x_j
\eeq
When $i\in S_1,j\in \Sb_1$, we have $i\not\equiv j(\text{mod}~n_1)$ by definition, which implies $\C_{i,j}=\C_{i,j}\x_j=0$ due to the first result. Fourth result can be shown by repeating these arguments for $\D\Ib_{S_2}\D^*$.
\end{proof}

Using Lemma \ref{useful}, we'll now proceed with the proof of Lemma \ref{nice}.

\subsection{Proof of Lemma \ref{nice} }
\begin{proof}$S_1$ and $S_2$ components will be analyzed seperately.\\
{\bf{Analyzing $S_1$}}: We may start by considering, $\y_1+\s$ and write,
\begin{align*}
\y_1+\s&=\y_1+\D^*(\bb_2-\cb_2)-(\bb_1-\cb_1)\\
&=\y_1+\D^*\bb_2-\Ib_{S_1}\D^*\bb_2-\bb_1+\D^*\Ib_{S_2}\D\bb_1
\end{align*}
First, we'll consider, $\Sc_1(\y_1+\s)$. We have the following,
\begin{align}
&\Sc_1(\y_1)=\Sc_1(\sg(\x))~\text{by construction of}~\y_1\\
&\Sc_1(\D^*\bb_2-\Ib_{S_1}\D^*\bb_2)=\Sc_1((\Ib-\Ib_{S_1})\D^*\bb_2)=0\\
&\Sc_1(\bb_1)=0~\text{by construction of}~\bb_1\\
&\Sc_1(\D^*\Ib_{S_2}\D\bb_1)=0~\text{from Lemma \ref{useful}.}
\end{align}
%The last line follows from the following, 
%\begin{itemize}
%\item $S_1,S_2$ are supports of Dirac pulses.
%\item $\Sc_1(\bb_1)=0$.
%\end{itemize}
%In particular, it follows from $(\D\Ib_{S_2}\D^*)_{i,j}=0$ whenever $mod(i-j,n_1)=0$. We should further emphasize that this property is specific to Fourier matrix $\D$ and won't hold in general.\\
Hence, we find, $\Sc_1(\y_1+\s)=\Sc_1(\y_1)=\Sc_1(\sg(\x))$.\\
To upper bound $\|\Scb_1(\y_1+\s)\|_\infty$, we may simply use $\|\Scb_1(\y_1-\bb_1)\|_\infty<1/2$ and write,
\beq
\nonumber\|\Scb_1(\y_1+\s)\|_\infty\leq \|\Scb_1(\y_1-\bb_1)\|_\infty+\|\Scb_1(\cb_1)\|_\infty+\|\Scb_1(\D^*\bb_2)\|_\infty
\eeq

\noindent{\bf{Analyzing $S_2$}}: Similarly, for $\Sc_2(\y_2+\D\s)$, we have the following,
\begin{align}
&\Sc_2(\y_2)=\la\Sc_2(\sg(\D\x))~\text{by construction}\\
&\Sc_2(\D\bb_1-\Ib_{S_2}\D\bb_1)=\Sc_2((\Ib-\Ib_{S_2})\D\bb_1)=0\\
&\Sc_2(\bb_2)=0~\text{by construction}\\
&\Sc_2(\D\Ib_{S_1}\D^*\bb_2)=0~\text{from Lemma \ref{useful}.}
\end{align}
Hence, $\Sc_2(\y_2-\D\s)=\la\Sc_2(\sg(\D\x))$ as desired.\\
To upper bound $\|\Scb_2(\y_2-\D\s)\|_\infty$, we may use $ \|\Scb_2(\y_2-\bb_2)\|_\infty<\la/2$ and write,
\beq
\nonumber\|\Scb_2(\y_2-\D\s)\|_\infty<\frac{\la}{2}+\|\Scb_2(\cb_2)\|_\infty+\|\Scb_2(\D\bb_1)\|_\infty
\eeq
\end{proof}
\subsection{Proof of Lemma \ref{bob}}
\begin{proof}
We start by stating a useful lemma on Gaussian variables, \cite{Ver}.
\begin{lem}\label{very simp}Let $g$ be a real standard normal random variable. Then, for any $t\geq 0$
\beq
\Pro(|g|>t)\leq 2\exp(-t^2/2)
\eeq
\end{lem}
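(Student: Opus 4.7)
The plan is to invoke the classical Chernoff--Cram\'er bound, exploiting the explicit moment generating function of the standard normal. Since the density of $g$ is symmetric about $0$, we have $\Pro(|g|>t)=2\Pro(g>t)$, so it suffices to bound the one-sided tail by $\exp(-t^2/2)$ and then multiply by two.

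For any $\la>0$, Markov's inequality applied to $e^{\la g}$ gives
\beq
\Pro(g>t)=\Pro(e^{\la g}>e^{\la t})\leq e^{-\la t}\,\E[e^{\la g}].
\eeq
A direct computation by completing the square yields the Gaussian moment generating function $\E[e^{\la g}]=e^{\la^2/2}$, so $\Pro(g>t)\leq \exp(\la^2/2-\la t)$. Minimizing the right-hand side over $\la>0$ gives the choice $\la=t$, whence $\Pro(g>t)\leq \exp(-t^2/2)$. Combined with the symmetry factor of two, this delivers the claimed bound, valid uniformly in $t\geq 0$.

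There is no real conceptual obstacle here: the MGF identity and a one-line optimization suffice, which is why the lemma is invoked essentially as a black-box estimate from \cite{Ver}. An alternative would be to bound the tail integral $\int_t^\infty e^{-x^2/2}\,dx$ directly by inserting the factor $x/t\geq 1$ inside the integrand to obtain the Mills ratio estimate $\frac{1}{t\sqrt{2\pi}}e^{-t^2/2}$ for $t\geq 1$, and then handling $0\leq t\leq 1$ via the trivial estimate $\Pro(|g|>t)\leq 1\leq 2\exp(-t^2/2)$. The Chernoff route is cleaner because it is uniform in $t$ and automatically produces the absolute constant $2$ in front, which is the shape needed when subsequently controlling the i.i.d.\ Gaussian entries of $\Scb_i(\y_i)$ to derive the zero-or-subgaussian dichotomy of Lemma \ref{bob}.
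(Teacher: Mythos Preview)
Your proof is correct. The paper does not actually supply its own proof of this lemma: it merely states the bound and attributes it to \cite{Ver}, so there is nothing to compare against. Your Chernoff--Cram\'er argument is the standard route and exactly what one finds in that reference.
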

Our discussion will be for $S_1$ only. Proof for $S_2$ is identical.\\
{\bf{Case 1: Estimating $\Pro(\Scb_1(b_1)_i=0)$}} \\
Observe that $\A_{S_1}$ and $\A_{\Sb_1}$ are independent matrices with i.i.d. Gaussian entries. Hence, for fixed $\A_{S_1}$, $\A_{\Sb_1}$ is i.i.d. $\Scb_1(\y_1)$ is a vector with i.i.d. complex Gaussian entries with variance $\frac{\|\s_1\|_2^2}{m}$. Next, from (\ref{myb}) it can be seen that $\Scb_1(\bb_1)$ is an entry wise function of $\Scb_1(\y_1)$ and hence i.i.d. Using Lemma \ref{very simp} and conditioned on $\sigma_{min}(\A_{S_1})\geq 1/\sqrt{2}$ for any $i\in\Sb_1$
\beq
\Pro(\R(b_{1,i})=0)=\Pro(|\R(y_{1,i})|<\frac{1}{4})\geq 1-2\exp(-\frac{m}{16|S_1|})
\eeq
as variance of $\R(y_{1,i})$ is at most $\frac{|S_1|}{m}$. Using a union bound over real and imaginary parts of $b_{i,j}$, we find,
\beq
\Pro(\R(b_{1,i})=0)\geq 1-4\exp(-\frac{m}{16|S_1|})
\eeq
{\bf{Case 2: Subgaussian norm when $\Scb_1(\bb_1)_j\neq 0$}}\\
Let us first define a subgaussian random variable and its norm.
\begin{defn}Let $\z\in\Rb$ be a scalar random variable. Assume for some $K<\infty$,
\beq\label{hizli}
(\E[|\z|^n])^{1/n}\leq K\sqrt{n}~~\text{for all integers}~n\geq 1
\eeq
Then, $\z$ is a subgaussian random variable and smallest $K$ satisfying (\ref{hizli}) is norm of $\z$.
\end{defn}
Assume $i\in\Sb_1$. This time, we consider the case where $|y_{1,i}|>0$. Clearly real and imaginary components of $b_{1,i}$ are independent as it is the case for $y_{1,i}$. Without loss of generality consider the real part.
Observe that, if $\R(b_{1,i})\neq 0$ then it is $\R(y_{1,i})-\frac{1}{4}\sg(\R(y_{1,i}))$ where $\text{var}(\R(y_{1,i}))\leq \frac{|S_1|}{m}\leq \frac{1}{64}$ by assumption. Hence, using following lemma we can conclude that subgaussian norm of $b_{1,j}$ is upper bounded by $c_0\sqrt{|S_1|/m}$ as $1/4>\sqrt{2}/8$.
\begin{lem} \label{sub gauss}Let $c\geq \sqrt{2}$ be a scalar, $x$ be a standard normal random variable and,
\beq
z=x-c\cdot\sg(x)~\text{conditioned on}~|x|\geq c
\eeq
Then, $z$ has subgaussian norm at most $c_0$ for some absolute constant $c_0$.
\end{lem}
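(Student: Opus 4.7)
By the symmetry of the standard normal distribution, the conditional law of $|z|$ given $|x|\geq c$ coincides with the law of the nonnegative random variable $w := x-c$ conditioned on $x\geq c$, whose density on $[0,\infty)$ is $\phi(c+t)/(1-\Phi(c))$. The strategy is to show that $w$ enjoys a uniform Gaussian tail bound $\Pro(w>t)\leq K_0 \exp(-t^2/2)$, valid for all $t\geq 0$ and all $c\geq \sqrt{2}$, and then convert this tail bound to the moment definition of the subgaussian norm.

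For the tail bound, I would compute $\Pro(w>t)=(1-\Phi(c+t))/(1-\Phi(c))$ and apply the standard Mills ratio bounds: $1-\Phi(x)\leq \phi(x)/x$ in the numerator and $1-\Phi(x)\geq \tfrac{x}{x^{2}+1}\phi(x)$ in the denominator. Writing $\phi(c+t)/\phi(c)=e^{-ct-t^{2}/2}$, this yields
\begin{equation}
\nonumber\Pro(w>t)\leq \frac{c^{2}+1}{c(c+t)}\,e^{-ct-t^{2}/2}.
\end{equation}
The prefactor is at most $(c^{2}+1)/c^{2}\leq 3/2$ whenever $c\geq\sqrt{2}$ and $t\geq 0$, and the factor $e^{-ct}$ can be discarded since $c,t\geq 0$. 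Hence $\Pro(w>t)\leq \tfrac{3}{2}e^{-t^{2}/2}$ uniformly in $c\geq \sqrt{2}$.

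Next, I would convert to moments by the layer-cake formula: for every integer $n\geq 1$,
\begin{equation}
\nonumber \E[w^{n}] \;=\; n\int_{0}^{\infty} t^{n-1}\Pro(w>t)\,dt \;\leq\; \tfrac{3n}{2}\int_{0}^{\infty}t^{n-1}e^{-t^{2}/2}\,dt \;=\; \tfrac{3n}{2}\cdot 2^{(n-2)/2}\,\Gamma(n/2).
\end{equation}
Stirling's formula gives $\Gamma(n/2)^{1/n}\leq C_1\sqrt{n}$ for an absolute constant $C_1$, and the combinatorial factor $(\tfrac{3n}{2})^{1/n}$ is bounded by an absolute constant. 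Taking $n$-th roots therefore yields $(\E[w^{n}])^{1/n}\leq c_{0}\sqrt{n}$, which by the definition of the subgaussian norm given in the paper shows $\|z\|_{\psi_{2}}\leq c_{0}$ for an absolute constant $c_{0}$, independent of $c\geq\sqrt{2}$.

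The only subtle point is that the bound must be uniform over all $c\geq\sqrt{2}$, since as $c\to\infty$ the conditional distribution degenerates to something exponential-like with scale $1/c$. The potential obstacle is that an exponential random variable is not subgaussian, so a naive bound could fail; however, the Gaussian prefactor $e^{-t^{2}/2}$ coming directly from $\phi(c+t)/\phi(c)$ provides the necessary subgaussian decay independently of $c$, and the Mills-ratio denominator estimate is precisely what makes the $c$-dependent prefactor harmless once $c\geq\sqrt{2}$. This is why the assumption $c\geq\sqrt{2}$ (rather than merely $c>0$) is used: it keeps $(c^{2}+1)/c^{2}$ bounded by an absolute constant.
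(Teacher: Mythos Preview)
Your proof is correct and follows essentially the same route as the paper: both obtain the uniform Gaussian tail bound $\Pro(|z|>t)\leq C e^{-t^{2}/2}$ via the standard Mills-ratio upper and lower bounds on $Q(\cdot)$, using $c\geq\sqrt{2}$ to control the constant. The only cosmetic difference is that the paper then invokes Lemma~5.5 of \cite{Ver} to pass from the tail bound to the moment definition, whereas you carry out that conversion explicitly via the layer-cake formula and Stirling's estimate.
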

\begin{proof}
Following inequality is true for tail of Gaussian p.d.f,
\beq
\nonumber\frac{1}{\sqrt{2\pi}x}(1-\frac{1}{x^2})\exp(-x^2/2)<Q(x)< \frac{1}{\sqrt{2\pi}x}\exp(-x^2/2)
\eeq
Hence, using $c\geq \sqrt{2}$, for $t\geq 0$ we have,
\begin{align*}
\Pro(|z|>t)&=\frac{Q(t+c)}{Q(c)}\leq \frac{c\exp(-(t+c)^2/2)}{(t+c)(1-c^{-2})\exp(-c^2/2)}\\
&\leq 2\exp(-t^2/2)
\end{align*}
Result immediately follows from Lemma 5.5 of \cite{Ver} and from the bound on $\Pro(|z|>t)$.
\end{proof}
Finally, $b_{1,i}$ is zero mean as $y_{1,i}$ is distributed symmetrically around $0$ and construction of $b_{1,i}$ preserves the symmetry.
\end{proof}
\subsection{Proposition $5.10$ and sums of sub-gaussians}
Next, we state Proposition $5,10$ of \cite{Ver} for completeness, which gives a bound on weighted sum of subgaussians.
\begin{thm}[Proposition $5.10$ of \cite{Ver}]Let $\z_1,\dots,\z_l$ be subgaussian random variables with subgaussian norms upper bounded by $c_0>0$. Let $\ab\in\R^l$ be an arbitrarily chosen vector. Then, for all $t\geq 0$,
\beq
\Pro(|\sum_{i=1}^l a_i\z_i|\geq t)\leq 3\exp(-\frac{ct^2}{c^2\|\ab\|_2^2})
\eeq
where $c>0$ is an absolute constant.
\end{thm}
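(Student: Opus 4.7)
The statement is a standard Hoeffding-type inequality for weighted sums of independent subgaussians, so the plan is to run a Chernoff-type MGF argument in three stages. First, I would recall the moment-generating-function (MGF) characterization of subgaussians: if a zero-mean random variable $\z$ satisfies $(\E|\z|^n)^{1/n}\le c_0\sqrt{n}$ for all integers $n\ge 1$, then by Taylor-expanding $\exp(\la\z)$, bounding term by term using the moment condition, and comparing to the series for $\exp(C\la^2 c_0^2)$, one obtains
\beq
\nonumber\E[\exp(\la \z)]\le \exp(C_1 \la^2 c_0^2)\quad\text{for all }\la\in\Rb,
\eeq
for some absolute constant $C_1>0$. (If the $\z_i$ are not centered, one first centers them, absorbing an $O(c_0)$ factor into the subgaussian norm; the mean-zero assumption is implicit in the application to $\bb_i$ via the construction (\ref{myb}) and Lemma \ref{bob}.)

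Next, assuming the $\z_i$ are independent, I would apply Markov's inequality to $\exp(\la\sum a_i \z_i)$ and use independence to factor the MGF:
\bea
\nonumber\Pro\Big(\sum_{i=1}^l a_i \z_i\ge t\Big)&\le e^{-\la t}\prod_{i=1}^l \E[\exp(\la a_i \z_i)]\\
\nonumber&\le e^{-\la t}\exp\Big(C_1 \la^2 c_0^2 \|\ab\|_2^2\Big).
\eea
Optimizing over $\la>0$ by setting $\la=t/(2C_1 c_0^2\|\ab\|_2^2)$ yields the one-sided bound $\exp(-c\, t^2/(c_0^2 \|\ab\|_2^2))$ for an absolute $c>0$. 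Repeating the argument with $-\z_i$ (which have the same subgaussian norm) gives the matching lower-tail bound, and adding the two together produces the two-sided estimate with a constant factor of $2$ in front; the factor $3$ stated in the Proposition is a mild cosmetic slack that absorbs any centering/normalization adjustments.

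The main obstacle is the first stage, namely the precise constant tracking between the moment definition of the subgaussian norm used in the paper (via $(\E|\z|^n)^{1/n}\le K\sqrt{n}$) and the MGF bound $\E\exp(\la\z)\le\exp(C\la^2 K^2)$. Establishing this requires splitting $\E\exp(\la\z)=\sum_{n\ge 0}\la^n\E[\z^n]/n!$, bounding $|\E\z^n|\le (Kn^{1/2})^n$, and then using Stirling to compare $(K^2\la^2 n)^{n/2}/n!$ against $(C K^2\la^2)^n/n!^{1/2}$ and summing; the bookkeeping is routine but is where the universal constant $c$ in the final exponent is actually produced. Everything after that (Chernoff, independence, optimization, union of tails) is mechanical. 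No assumption beyond independence and the hypothesized bound on subgaussian norms is needed; in particular the weights $a_i$ enter only through $\|\ab\|_2$, as demanded.
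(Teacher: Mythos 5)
The paper does not prove this statement at all: it is quoted verbatim (as ``Proposition 5.10 of \cite{Ver}'') purely for completeness, so there is no in-paper argument to compare against. Your Chernoff/MGF route is the standard proof of that proposition and is essentially the one in the cited reference: pass from the moment characterization $(\E|\z|^n)^{1/n}\le K\sqrt{n}$ to the MGF bound $\E\exp(\la\z)\le\exp(CK^2\la^2)$, factor over independent summands, optimize the exponent, and union the two tails. The argument is correct, and you rightly identify the two hypotheses that the paper's restatement silently drops but that are genuinely needed: the $\z_i$ must be \emph{independent} and \emph{centered} (both hold in the paper's application, by Lemma \ref{bob} and the symmetry of the construction in (\ref{myb})). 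Two cosmetic notes: the $c^2$ in the displayed denominator is evidently a typo for $c_0^2$, which your final bound matches; and the leading constant in Vershynin's version is $e$ rather than $3$ or $2$, so your remark that the prefactor is slack is accurate. Nothing further is missing.
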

Based on this, we can obtain (\ref{show2}) as $\R(\Scb_1(\bb_1))$ is i.i.d. subgaussian with norm at most $c_0\sqrt{|S_1|/m}$ and we need to argue both contributions from real and imaginary parts are at most $\frac{1}{4\sqrt{2}}$ with high probability. In particular for $j$'th row of $\C$,
\begin{align}
\Pro( |\sum_i \R(c_{j,i})\R(b_{1,i})|>\frac{1}{8\sqrt{2}})\leq 3\exp(-\frac{cmn}{128|S_1||S_2|})
\end{align} 
Writing similar bounds for $|\sum_i \I(c_{j,i})\R(b_{1,i})|$, $|\sum_i \R(c_{j,i})\I(b_{1,i})|$, $|\sum_i \I(c_{j,i})\I(b_{1,i})|$ we can conclude in (\ref{show2}). Similarly, to obtain, (\ref{second}), we again use bounds on real and imaginary parts. This time we consider only the nonzero entries which are at most $2np$ with high probability. Then, denoting, for $j$'th row of $\D^*$ we can write,
\beq
\nonumber\Pro( |\sum_{i\big|b_{2,i}\neq 0} \R(r_{j,i})\R(b_{2,i})|>\frac{1}{8\sqrt{2}})\leq 3\exp(-\frac{cm}{2^8p|S_2|\la^2c_0^2})
\eeq
Doing this for all components and union bounding similarly yields (\ref{second}).
\subsection{Proof of Lemma \ref{lemdualmat}}
Finally, we give the proof of Lemma \ref{lemdualmat} which is quite similar to the proof of Lemma \ref{lemdual}.
\begin{proof}
Following the notation introduced for the matrix case, we need to show if such $\SB_1,\SB_2,\SB$ exist then a certain null space condition will hold for $\Ac$ which will guarantee recovery. Let us state this condition based on the sub gradients of nuclear norm and $\ell_1$ norm: For all $\Wb\in\N(\Ac)$ if the following holds then $\A$ is the unique optimum of JBPM.
\begin{align}
\label{NSPM}
f(\Wb):=&\lambda[\R(\li\sg(\A),\Sc(\Wb)\ri)+\|\Scb(\Wb)\|_1]\\
&\hspace{10pt}+\R(\li\U\V^*,\Wb\ri)+\|\Lcb(\Wb)\|_\star>0
\end{align}
Now, assume such $\SB_1,\SB_2,\SB$ exist and consider $\vv_1,\vv_2$ where:
\begin{equation}
\vv_1=\Ac^\star(\SB_1)+\SB~~~\text{and}~~~\vv_2=\Ac^\star(\SB_2)-\SB
\end{equation}
Observe that for any $\Wb\in\N(\Ac)$, we have $\li\vv_1+\vv_2,\Wb\ri=0$. Now, using this:
\begin{align}
\label{eqzeroM}
0&=\R(\li\vv_1+\vv_2,\Wb\ri)\\
&= \R(\lambda\cdot\li\sg(\A),\Sc(\Wb)\ri+\li\Scb(\Ac^\star(\SB_2)-\SB),\Wb\ri)\nonumber\\
\nonumber&\hspace{30pt}+\R(\li\U\V^*,\Wb\ri+\li\Lcb(\Ac^*(\SB_1)+\SB),\Wb\ri)
\end{align}
To end the proof, using invertibility of $\Ac(\cdot)$ on $\Lc\cap\Sc$ we can conclude $\Lcb(\Wb)\neq 0~\text{or}~\Scb(\Wb)\neq 0$ hence:
\begin{align}
&\R(\li\Lcb(\Ac^\star(\SB_1)+\SB),\Wb\ri)<\|\Lcb(\Wb)\|_\star~~~\text{or}\\
&\R(\li\Scb(\Ac^\star(\SB_2)-\SB),\Wb\ri)<\lambda\|\Scb(\Wb)\|_\infty
\end{align}
Overall, existence of $\SB_1,\SB_2,\SB$ implies the desired null space condition, i.e., $f(\Wb)>0$ for all $\Wb\in\N(\Ac)$.
\end{proof}
%
%
%\begin{lem}\label{important}Let $\x\in\Cb^n$ be a vector with i.i.d standard normal entries. Let $c>\sqrt{2}$ be a scalar and,
%\beq
%\y=\begin{cases}0~\text{if}~|x_i|\leq c\\x_i-c\sg(x_i)~\text{else}\end{cases}
%\eeq
%Let $s=\sum_{i=1}^n y_i$. Then, for all $t>0$, we have,
%\beq
%\Pro(s>t)\leq 3\exp(-\frac{Ct^2}{np})+\exp(-np/4)
%\eeq
%where $p=\exp(-c^2/2)$ and $C>0$ is an absolute constant.
%\end{lem}
%\begin{proof}Let $x$ be a standard normal and $z$ be the variable $x-c\sg(x)$ conditioned on $|x|>c$. Then, subgaussian norm of $x$ is upper bounded by an absolute constant $c_0$. Let $k$ be the number of entries of $\x$ with absolute values larger than $c$. Then, $s=\sum_{i=1}^kz_i$ where $\{z_i\}_{i=1}^k$ are i.i.d and distributed as $z$. For a fixed $k$, we have,
%\beq
%\Pro(s>t)\leq 3\exp(-c_1\frac{t^2}{k})
%\eeq
%where $c_1>0$ is an absolute constant. Next, let $p=\exp(-c^2/2)$ and observe that, $k=\sum_{i=1}^n b_i$ where,
%\beq
%b_i=\begin{cases}1~\text{w.p.a.m}~p\\0~\text{w.p.a.l}~1-p\end{cases}
%\eeq
%Then, using Chernoff bounding,
%\beq
%\Pro(k\leq 2np)\leq \exp(-np/4)
%\eeq
%Then, assuming $k\leq k_0=2np$, we have,
%\beq
%\Pro(s>t)\leq \sum_{i=1}^{k_0}3\exp(-c_1\frac{t^2}{m})\Pro(k=i)\leq 3\exp(-c_1\frac{t^2}{2np})
%\eeq
%Using a union bound,
%\beq
%\Pro(s>t)\leq 3\exp(-\frac{Ct^2}{np})+\exp(-np/4)
%\eeq
%where $p=\exp(-c^2/2)$.
%\end{proof}
%\begin{lem} Let $x$ be a standard normal variable. Then,
%\beq
%\Pro(|x|>c)\leq \exp(-c^2/2)
%\eeq
%\end{lem}

\end{document}